
\documentclass[sigconf]{acmart}

\settopmatter{printacmref=true}


\usepackage{balance}
\usepackage{hyperref}
\usepackage{amsmath,amssymb,amsfonts}
\usepackage{algorithmic}
\usepackage{graphicx}
\usepackage{textcomp}
\usepackage{xcolor}
\usepackage{pifont}
\usepackage{amsthm} 
\usepackage{mathrsfs}
\usepackage{booktabs} 
\usepackage{url}  
\usepackage{graphicx}  
\usepackage[ruled,vlined,linesnumbered]{algorithm2e}
\usepackage{setspace}
\usepackage{bm}
\usepackage{floatpag} 
\usepackage{enumitem}
\usepackage{color}
\usepackage{subfigure}
\usepackage{enumitem}
\setlist{leftmargin=13pt}
\newtheorem{defn}{Definition}

\newtheorem{thm}{Theorem}

\newcommand{\bp}[1]{{\mathbb{P}}\left[{#1}\right]}

\newcommand{\argmaxa}{\operatornamewithlimits{argmax}}

\fancyhead{}
\usepackage{fancyhdr}
\pagestyle{fancy}
\lhead{This paper has been published as a full paper in ACM International Conference on Information and Knowledge Management (CIKM) 2019.} 
\cfoot{\thepage}


\def\BibTeX{{\rm B\kern-.05em{\sc i\kern-.025em b}\kern-.08emT\kern-.1667em\lower.7ex\hbox{E}\kern-.125emX}}
    
%

\begin{document}


\title{Privacy-preserving Crowd-guided AI Decision-making\\ in Ethical Dilemmas}

\author{Teng Wang}
\affiliation{%
\institution{Xi'an Jiaotong University}
  \city{Xi'an}
  \state{Shaanxi}
  \country{China}
}
\email{wangteng0610@stu.xjtu.edu.cn}

\author{Jun Zhao}
\affiliation{%
  \institution{Nanyang Technological University}
  \country{Singapore}}
\email{junzhao@ntu.edu.sg}
  
\author{Han Yu}
\affiliation{%
  \institution{Nanyang Technological University}
  \country{Singapore}} 
\email{han.yu@ntu.edu.sg}
  
\author{Jinyan Liu}
\affiliation{%
  \institution{The University of Hong Kong}
  \country{Hong Kong}}
\email{jyliu@cs.hku.hk}

\author{Xinyu Yang}
\affiliation{%
  \institution{Xi'an Jiaotong University}
  \city{Xi'an}
  \state{Shaanxi}
  \country{China}}
\email{yxyphd@mail.xjtu.edu.cn}

\author{Xuebin Ren}
\affiliation{%
  \institution{Xi'an Jiaotong University}
  \city{Xi'an}
  \state{Shaanxi}
  \country{China}}
\email{xuebinren@mail.xjtu.edu.cn}

\author{Shuyu Shi}
\affiliation{%
 \institution{Nanjing University}
 \city{Nanjing}
 \state{Jiangsu}
 \country{China}}
\email{ssy@nju.edu.cn}

%
\renewcommand{\shortauthors}{~}

%
\begin{abstract}
With the rapid development of artificial intelligence (AI), ethical issues surrounding AI have attracted increasing attention. In particular, autonomous vehicles may face moral dilemmas in accident scenarios, such as staying the course resulting in hurting pedestrians or swerving leading to hurting passengers. To investigate such ethical dilemmas, recent studies have adopted preference aggregation, in which each voter expresses her/his preferences over decisions for the possible ethical dilemma scenarios, and a centralized system aggregates these preferences to obtain the winning decision. Although a useful methodology for building ethical AI systems, such an approach can potentially violate the privacy of voters since moral preferences are sensitive information and their disclosure can be exploited by malicious parties resulting in negative consequences. In this paper, we report a first-of-its-kind privacy-preserving crowd-guided AI decision-making approach in ethical dilemmas. We adopt the formal and popular notion of differential privacy to quantify privacy, and consider four granularities of privacy protection by taking voter-/record-level privacy protection and centralized/distributed perturbation into account, resulting in four approaches VLCP, RLCP, VLDP, and RLDP, respectively. Moreover, we propose different algorithms to achieve these privacy protection granularities, while retaining the accuracy of the learned moral preference model. Specifically, VLCP and RLCP are implemented with the data aggregator setting a universal privacy parameter and perturbing the averaged moral preference to protect the privacy of voters' data. VLDP and RLDP are implemented in such a way that each voter perturbs her/his local moral preference with a personalized privacy parameter. Extensive experiments based on both synthetic data and real-world data of voters' moral decisions demonstrate that the proposed approaches achieve high accuracy of preference aggregation while protecting individual voter's privacy.
\end{abstract}

%
%

\begin{CCSXML}
<ccs2012>
<concept>
<concept_id>10002978.10003018.10003019</concept_id>
<concept_desc>Security and privacy~Data anonymization and sanitization</concept_desc>
<concept_significance>500</concept_significance>
</concept>
<concept>
<concept_id>10002978.10003029.10011150</concept_id>
<concept_desc>Security and privacy~Privacy protections</concept_desc>
<concept_significance>500</concept_significance>
</concept>
<concept>
<concept_id>10002951.10002952</concept_id>
<concept_desc>Information systems~Data management systems</concept_desc>
<concept_significance>500</concept_significance>
</concept>
<concept>
<concept_id>10002978.10003029</concept_id>
<concept_desc>Security and privacy~Human and societal aspects of security and privacy</concept_desc>
<concept_significance>500</concept_significance>
</concept>
</ccs2012>
\end{CCSXML}
\ccsdesc[500]{Security and privacy~Data anonymization and sanitization}
\ccsdesc[500]{Security and privacy~Privacy protections}
\ccsdesc[500]{Security and privacy~Human and societal aspects of security and privacy}

%
\keywords{Artificial intelligence; Ethical decision making; Differential privacy}

%
\maketitle

\thispagestyle{fancy} \pagestyle{fancy}  

\section{Introduction}

Artificial intelligence (AI) is becoming an integral part of our daily lives and critical infrastructures. With the widespread applications of AI, ethical issues surrounding AI have become an important socio-technical challenge~\cite{wallach2008moral,awad2018moral}. One of the fundamental questions in AI ethics is how to allow humans to guide AI to make moral decisions when faced with ethical dilemmas.
An ethical dilemma is a situation in which any decision violates certain aspects of ethics~\cite{yu2018building}.
The following is a concrete example of an ethical dilemma that autonomous vehicles encounter \cite{greene2016our}. A self-driving car has a sudden mechanical failure and cannot brake in time. If the car were to continue its current trajectory, it would kill pedestrians but the passenger would be safe. Alternatively, if it were to swerve into a wall, the pedestrians would be safe but the passenger would be killed. In such a situation, an ethical dilemma arises when the AI must choose between the two alternatives. This is just one example of many ethical dilemmas AI technologies face \cite{shariff2017psychological}. To enable AI to deal with such situations, it is useful to aggregate opinions from the human society.

To explore moral dilemmas faced by autonomous vehicles, Bonnefon~\emph{et~al.}~\shortcite{bonnefon2016social} surveyed via Amazon Mechanical Turk whether people prefer to save many versus few lives, or the vehicle's passengers versus pedestrians. Their findings suggest that participants prefer others to buy autonomous vehicles which sacrifice their passengers for the greater good, but prefer for themselves to ride in autonomous vehicles which protect passengers at all costs. 

In a study similar to~\cite{bonnefon2016social}, Awad~\emph{et~al.}~\shortcite{awad2018moral} also gathered data about voters' decisions in scenarios where autonomous vehicles face ethical dilemmas, but at a much larger scale and with each voter's data containing more dimensions. Specifically, 
they built an online experimental platform named the Moral Machine~\cite{MoralMachine},
which collected 39.61 million decisions in ten languages from 4 million people in 233 countries and territories. The data collected from each voter are high-dimensional, which include preferences of saving many versus few lives, passengers versus pedestrians, the young versus the elderly, humans versus pets, and pedestrians who cross legally versus pedestrians who jaywalk, etc. With such a high volume of crowdsourced data, Awad~\emph{et~al.}~\shortcite{awad2018moral} summarized global and regional moral preferences as well as cross-cultural ethical variations.



Using the data collected from the Moral Machine,  
Noothigattu~\emph{et~al.} \shortcite{noothigattu2018voting} built a global moral preference model so as to guide automated ethical decision-making. Specifically, each voter's data are analyzed to infer her/his parameter of moral preference, and the moral preference of the society is obtained by averaging voters' parameters. Each voter's data consist of a number of records, with each record being the voter's preference in a given ethical dilemma scenario. In each scenario, no matter how the autonomous vehicle decides to act, some undesirable outcome will happen. A voter's preference in each scenario means that the voter prefers a decision made by the autonomous vehicle (e.g., staying the course and killing the crossing pedestrians) over the other alternative (e.g., swerving and killing the passengers).

As discussed above, existing studies on AI ethics (e.g.,~\cite{bonnefon2016social,awad2018moral,noothigattu2018voting}) directly analyze voters' data of moral decisions. However, such approaches may violate the privacy of voters. Since moral preferences are sensitive information and their disclosure may be exploited by malicious parties to produce negative consequences. 
One may wonder how the learned and aggregated moral preference model for the society can leak a voter's sensitive choices about the moral dilemmas. A recent work~\cite{fredrikson2015model} has shown that an adversary can successfully use a learned deep neural network model to infer users' sensitive facial information in the training dataset. Specifically, for an attacker which obtains the learned face recognition model, although it is difficult to infer all users' faces, the attacker may recover an image which is close to one user's face, resulting in a privacy breach of the particular user in the training dataset~\cite{fredrikson2015model}.

In this paper, to prevent the learned preference model of the society from leaking individual voter's sensitive information, we propose a differential privacy (DP)-based secure preference aggregation model to enable crowdsourced opinions to guide AI decision-making in ethical dilemmas without exposing sensitive privacy information. We adopt the formal notion of DP~\cite{dwork2014algorithmic,dwork2006our} to quantify privacy. Intuitively, by incorporating some noise, the output of an algorithm under DP will not change significantly due to the presence or absence of one voter's information in the dataset. 
\textbf{Contributions.} With the proposed approach, we make the following contributions in this paper.
\begin{itemize} 
    \item We quantify four granularities of privacy protection by combing voter-/record-level privacy protection and centralized/distributed perturbation, which are denoted by VLCP, RLCP, VLDP, and RLDP. We further propose different algorithms to achieve them. Specifically, to achieve VLCP and RLCP, the aggregator adds Laplace noise to the average preference parameter of all voters. VLDP and RLDP are achieved by having each voter adding Laplace noise to her/his local moral preference parameter with a personalized privacy parameter. Moreover, we also propose to achieve RLDP by perturbing the objective function when learning each voter's preference parameters, which achieves higher accuracy than the addition of Laplace noise.
    \item We conduct extensive experiments on both synthetic datasets and a real-world dataset extracted from the Moral Machine. The results demonstrate that our algorithms can achieve high accuracy while ensuring strong privacy protection.
\end{itemize}
To the best of our knowledge, this is the first research on privacy issues in studies of human-guided ethical AI decision-making. More specifically, we proposed privacy-preserving mechanisms to address voters' privacy in computing a society's moral preference.

\textbf{Organization.} The remainder of the paper is organized as follows. Section~\ref{sec:related} reviews the related studies of AI ethics and privacy protection. In Section~\ref{sec:preliminaries}, we discuss the preliminaries of differential privacy and formalize the research problem. Section~\ref{sec:mechanisms} presents the proposed privacy-preserving crowd-guided ethical AI decision-making algorithms. In Section~\ref{sec:experiments}, we conduct extensive experiments to evaluate the effectiveness of our algorithms. Section~\ref{sec:discussion} provides discussions and future directions. Section~\ref{sec:conclusion} concludes the paper.

\section{Related Work}\label{sec:related}

The widespread adoption of AI has made it pertinent to address the ethical issues surrounding this technology.


Greene~\textit{et~al.}~\shortcite{greene2016embedding} advocated solving AI ethical issues via preference aggregation, in which each voter expresses her/his preferences over the possible decisions, and a centralized system aggregates these preferences to obtain the winning decision. Conitzer~\textit{et~al.}~\shortcite{conitzer2017moral} discussed the idea of collecting a labeled dataset of moral dilemmas represented as lists of feature values, and then leveraging machine learning techniques to learn to classify actions as morally right or wrong. However, no actual data collection or analysis was presented. Using a large-scale dataset of voters' ethical decisions collected by the Moral Machine~\shortcite{awad2018moral}, Noothigattu~\textit{et~al.}~\shortcite{noothigattu2018voting} built a moral preference model for each voter and averaged these models to obtain the moral preference of the whole society. Hammond and Belle~\shortcite{hammond2018deep} utilized tractable probabilistic learning to induce models of moral scenarios and blameworthiness automatically from datasets of human decision-making, and computed judgments tractably from the obtained models. Zhang and Conitzer \shortcite{zhang2019pac} showed that many classical results from Probably Approximately Correct (PAC) learning can be applied to the preference aggregation framework. For a comprehensive study of more related work on AI ethics, interested readers can refer to a recent survey by Yu~\textit{et~al.}~\shortcite{yu2018building}.




The above studies significantly contributed to the emerging area of AI ethics, but unfortunately, none of them is designed to protect voters' privacy in the data analysis process. Moral preferences of voters are sensitive information and should be well protected, since their disclosure can be exploited by malicious entities to have adverse consequences. A recent work~\cite{fredrikson2015model} has shown that even a learned deep neural network can leak sensitive information in the training dataset. Hence, it is critical to incorporate formal privacy protection into models for AI ethics guided by humans. 

To quantify privacy, we use the rigorous notion of differential privacy (DP) \cite{dwork2014algorithmic}. Recently, differential privacy has been widely studied in many areas \cite{abadi2016deep}. For general statistical data release problems, DP can be achieved by the Laplace mechanism \cite{dwork2006calibrating} which injects Laplace noise into the released statistical results.
Furthermore, for parameter estimation solved by an optimization problem, an alternative algorithm to achieve DP is the functional mechanism proposed by~\cite{zhang2012functional}, which perturbs the objective function of the optimization problem rather than the optimized parameters.

As a representative example, this paper adds voter privacy protection into the learned moral preference model of Noothigattu~\textit{et~al.}~\shortcite{noothigattu2018voting}. Incorporating privacy protection into other studies are future directions. Although Jones~\textit{et~al.}~\shortcite{jones2018ai} and Brubaker \shortcite{brubaker2018artificial} emphasized the importance of security and privacy protection in AI ethics, to the best of our knowledge, our paper is the first technical work to formally build privacy protection into the study of AI ethics.

\section{Preliminaries}\label{sec:preliminaries}
Differential privacy (DP) \cite{dwork2006our,dwork2014algorithmic} provides strong guarantees for the privacy of any individual in a query response, regardless of the adversary's prior knowledge. 

\begin{defn}[\textbf{$\epsilon$-Differential Privacy (DP)}~\cite{dwork2014algorithmic,dwork2006our}] \label{defn-DP}A randomized algorithm $Y$ satisfies $\epsilon$-differential privacy, if for any two neighboring datasets $D$ and $D'$ which differ in only one tuple, and for any possible subset of outputs $\mathcal{Y}$ of $Y$, we have
\begin{align} \label{eqn:defn-DP}
\bp{ Y(D) \in \mathcal{Y} } \le e^{\epsilon} \times  \bp{Y(D') \in  \mathcal{Y}},
\end{align}
where $\bp{ \cdot }$ is the probability of an event. $\epsilon$ refers to the privacy parameter. Smaller $\epsilon$ means stronger privacy protection, but less utility as more randomness is introduced into~$Y$.  
\end{defn} 

There are two variants of DP: \mbox{bounded} DP and \mbox{unbounded} DP~\cite{tramer2015differential}. In \mbox{bounded} DP, which we adopt in this paper, two neighboring datasets have the same sizes but different records at only one of all positions. In \mbox{unbounded} DP, the sizes of two neighboring datasets differ by one (i.e., one tuple is in one database, but not in the other). 

The \textbf{Laplace Mechanism}~\cite{dwork2006calibrating} can be used to achieve DP by adding independent Laplace noise to each dimension of the query output. The scale $\lambda$ of the zero-mean Laplace noise $Lap(\lambda)$ is set as $\lambda=\Delta/\epsilon$, where $\Delta$ is the $\ell_1$-norm sensitivity of the query function $Q$, which measures the maximum change of the outputs over neighboring datasets (i.e. \mbox{$\Delta  = \max_{\textrm{neighboring datasets $D,D'$}} \|Q(D) - Q(D')\|_{1}$}). 

\textbf{Problem Formulation.} In our system model, each voter $i\in \{1,2,\cdots,N\}$ owns a dataset of $n$ pairwise comparisons (i.e., $n$ records), denoted by \mbox{$D_i= \{\langle X_1^{(i)},Z_1^{(i)}\rangle, \langle X_2^{(i)},Z_2^{(i)}\rangle, \cdots, \langle X_n^{(i)},Z_n^{(i)}\rangle$ \}.} $X_j^{(i)}$ and $Z_j^{(i)}$ are pairwise alternatives capturing moral dilemmas in a scenario (e.g., $X_j^{(i)}$ means staying the course resulting in killing crossing pedestrians, while $Z_j^{(i)}$ means swerving leading to killing passengers). Any pair $\langle X_j^{(i)},Z_j^{(i)}\rangle$ for $j\in \{1,2,\cdots,n\}$ means that voter $i$ chose $X_j^{(i)}$ over $Z_j^{(i)}$. Each $X_j^{(i)}$ or $Z_j^{(i)}$ is a $d$-dimensional vector such that $X_j^{(i)}[k]$ and $Z_j^{(i)}[k]$ for $k\in\{1,2,\cdots,d\}$ denote the $k$-th dimensional value of $X_j^{(i)}$ and $Z_j^{(i)}$, respectively. The dimension of a scenario represents its features (e.g., the number of young passengers, the number of old pedestrians, the number of pets, etc).



Noothigattu\textit{~et al.~}\shortcite{noothigattu2018voting} adopted the Thurstone--Mosteller process \cite{mosteller2006remarks}, which models the utility as a Gaussian random variable. Let $\bm{\beta}_i\in \mathbb{R}^d$ be the preference parameter of voter $i$, where $ \mathbb{R}$ denotes the set of real numbers. It is assumed~\cite{mosteller2006remarks} that the utilities of alternatives $X_j^{(i)}$ and $Z_j^{(i)}$ follow Gaussian distributions $\mathcal{N}(\bm{\beta}_i^\top X_j^{(i)},\frac{1}{2})$ and $\mathcal{N}(\bm{\beta}_i^\top Z_j^{(i)},\frac{1}{2})$, respectively. Thus, the result of the utility by choosing $X_j^{(i)}$ minus the utility by choosing $Z_j^{(i)}$  follows a Gaussian distribution \mbox{$\mathcal{N}(\bm{\beta}_i^\top(X_j^{(i)}-Z_j^{(i)}),1)$,} so that \\$\bp{\text{a voter $i$ chooses $X_j^{(i)}$ over $Z_j^{(i)}$}} = \Phi (\bm{\beta}_i^\top(X_j^{(i)}-Z_j^{(i)}))$, where $\Phi$ is the cumulative distribution function of the standard normal distribution (i.e. $\Phi(s): = \frac{1}{2}+ \frac{1}{\sqrt{\pi}}\int_{0}^{\frac{s}{\sqrt{2}}}e^{-{t}^2} \hspace{1pt} \emph{d} t $). Then, the maximum likelihood estimation (MLE) method is used to learn the parameter $\bm{\beta}_i$ for each voter $i\in\{1,2,\cdots,N\}$. In particular, the log-likelihood function is defined as follows:
\begin{align}\label{log-func}
    \mathcal{L}(\bm{\beta}_i, D_i) : =\sum_{j=1}^{n}\ln\Phi (\bm{\beta}_i^\top(X_j^{(i)}-Z_j^{(i)})).
\end{align}

Based on Eq.~(\ref{log-func}), MLE is used to estimate the $\bm{\beta}_i$ of each voter \cite{noothigattu2018voting}. However, the optimal parameter $\bm{\beta}_i$ to maximize Eq.~(\ref{log-func}) may not always exist. For example, given $i\in\{1,2,\cdots,N\}$, if $X_j^{(i)}-Z_j^{(i)}$ is positive for each $j\in\{1,2,\cdots,n\}$, suppose there exists $\bm{\beta}_i$ to maximize Eq.~(\ref{log-func}), then each dimension of $\bm{\beta}_i$ is positive. However, a contraction occurs with $\mathcal{L}(2\bm{\beta}_i, D_i) > \mathcal{L}(\bm{\beta}_i, D_i)$. Hence, in the case of positive $X_j^{(i)}-Z_j^{(i)}$ for all $j\in\{1,2,\cdots,n\}$, no  $\bm{\beta}_i$ exists to maximize Eq.~(\ref{log-func}). In order to ensure that (\textit{i}) the optimal parameter can always be found, and (\textit{ii}) bounded sensitivity that will be used in the Laplace mechanism for achieving differential privacy, we introduce a constraint that each voter's parameter has an $\ell_1$-norm at most $B$. Specifically, we define user $i$'s parameter $\overline{\bm{\beta}}_i (D_i)$ by
\begin{align}
\overline{\bm{\beta}}_i (D_i) : =  \argmaxa_{\bm{\beta}_i:~\|\bm{\beta}_i\|_1 \leq B }  \mathcal{L}(\bm{\beta}_i, D_i)  . \label{eq-bar-beta-i}
\end{align}

After learning the parameters $\overline{\bm{\beta}}_i (D_i)$ for all voters, the preference parameter ${\overline{\bm{\beta}}}(D)$ for the whole society is computed by averaging all voters' preference parameters (i.e. ${\overline{\bm{\beta}}}(D):=\frac{1}{N}\sum_{i=1}^N\overline{\bm{\beta}}_i (D_i)$, for  $D:=\{D_1,D_2,\cdots,D_N\}$ denoting the whole dataset of $N$ voters). 
Therefore, the purpose of this paper is to design privacy-preserving algorithms to guarantee voter privacy while learning a society's preference parameter with high accuracy.




\section{Our Solutions}\label{sec:mechanisms}

This section first introduces four privacy protection paradigms and then presents our algorithms to achieve these paradigms.

\subsection{Privacy Modeling}\label{sec:system-model}

Our paper incorporates privacy protection into preference aggregation of Noothigattu~\emph{et~al.}~\shortcite{noothigattu2018voting}. In this setting, each voter's data consist of a number of records, with each record being the voter's preference in a given scenario. In each scenario, no matter how the autonomous vehicle decides to act, someone will get hurt. A voter's preference in each scenario means that the voter prefers a decision made by the autonomous vehicle (e.g., staying the course and killing crossing pedestrians) over the other alternative (e.g., swerving and killing passengers).

Based on the above observations, in our study of privacy protection for crowdsourced data collection in AI ethics, we will consider two variants for the meaning of neighboring datasets: 1) record-neighboring and 2) voter-neighboring datasets, which allow us to achieve \textit{voter-level privacy protection} and \textit{record-level privacy protection}, respectively:
\begin{itemize}
    \item \textbf{Voter-level privacy protection.} Two datasets are voter-neighboring datasets if one can be obtained from the other by changing one voter's records arbitrarily.
    \item \textbf{Record-level privacy protection.} Two datasets are record-neighboring if one can be obtained from the other by changing a \textit{single} record of one voter.
\end{itemize}

For completeness, we also consider the case of a \textit{trusted aggregator} and the case of an \textit{untrusted aggregator}, where privacy is achieved with \textit{centralized perturbation} and \textit{distributed perturbation}, respectively:
\begin{itemize}
    \item \textbf{Centralized perturbation.} When the aggregator is trusted by the voters, the aggregator perturbs the aggregated information (e.g., by adding noise) in a centralized manner to protect privacy. We refer to this as \textit{centralized perturbation}.
    \item \textbf{Distributed perturbation.} When the aggregator is not trusted by the voters, each voter independently perturbs her/his local data (e.g., by adding noise) in a distributed manner for privacy protection. We refer to this as \textit{distributed perturbation}.
\end{itemize}


 

Permutating centralized/distributed perturbation and voter-level/\\record-level privacy protection yields four alternative privacy protection paradigms:
(1) \underline{v}oter-\underline{l}evel privacy protection with \underline{c}entralized \underline{p}erturbation (VLCP); (2) \underline{r}ecord-\underline{l}evel privacy protection with \underline{c}entralized \underline{p}erturbation (RLCP); (3) \underline{v}oter-\underline{l}evel privacy protection with \underline{d}istributed \underline{p}erturbation (VLDP); (4) \underline{r}ecord-\underline{l}evel privacy protection with \underline{d}istributed \underline{p}erturbation (RLDP).
\begin{enumerate}
    \item VLCP: For $\epsilon$-differential privacy with VLCP, the aggregator chooses a universal privacy parameter $\epsilon$ and enforces a randomization algorithm $Y$ such that for any two \textit{voter}-neighboring datasets $D$ and $D'$, and for any possible subset of outputs $\mathcal{Y}$ of $Y$, we have $\bp{ Y(D) \in \mathcal{Y} } \le e^{\epsilon} \times \bp{Y(D') \in \mathcal{Y}}$.
    \item RLCP: For $\epsilon$-differential privacy with RLCP, the aggregator chooses a universal privacy parameter $\epsilon$ and enforces a randomization algorithm $Y$ such that for any two \textit{record}-neighboring datasets $D$ and $D'$, and for any possible subset of outputs $\mathcal{Y}$ of $Y$, we have $\bp{ Y(D) \in \mathcal{Y} } \le e^{\epsilon} \times \bp{Y(D') \in \mathcal{Y}}$.
    \item VLDP: For $\epsilon_i$-differential privacy with VLDP, voter $i$ chooses a privacy parameter $\epsilon_i$ and enforces a randomization algorithm $Y_i$ such that for any two  datasets $D_i$ and $D_i'$ (which are naturally   \textit{voter}-neighboring), and for any possible subset of outputs $\mathcal{Y}_i$ of $Y_i$, we have $\bp{ Y_i(D_i) \in \mathcal{Y}_i } \le e^{\epsilon_i} \times \bp{Y_i(D_i') \in \mathcal{Y}_i}$. Note that VLDP is the same as the notion of $\epsilon$-local differential privacy~\cite{duchi2013local} which has recently received much interest~\cite{wang2019collecting,tang2017privacy,erlingsson2014rappor}.
    \item RLDP: For $\epsilon_i$-differential privacy with VLDP, voter $i$ chooses a privacy parameter $\epsilon_i$ and enforces a randomization algorithm $Y_i$ such that for any two \textit{record}-neighboring datasets $D_i$ and $D_i'$, and for any possible subset of outputs $\mathcal{Y}_i$ of $Y_i$, we have $\bp{ Y_i(D_i) \in \mathcal{Y}_i } \le e^{\epsilon_i} \times \bp{Y_i(D_i') \in \mathcal{Y}_i}$.
\end{enumerate}




In what follows, we further disambiguate the above four privacy protection paradigms, and outline how each of them can be realized.

First, VLCP can be achieved with the data aggregator setting a universal privacy parameter $\epsilon$ and perturbing the averaged moral preference by adding Laplace noise to protect the privacy of each voter's complete data. Second, RLCP can be achieved in the same way as VLCP when using Laplace mechanism. This is because the $\ell_1$ sensitivity of the averaged parameter under RLCP is the same as that under VLCP, which will be proved in Theorem~\ref{thm1-algo-user-level}.
Third, VLDP is a strong privacy protection paradigm and can be achieved by perturbing each voter's moral preference by adding Laplace noise to protect each record in the dataset. Each voter $i$ can choose a personalized privacy parameter $\epsilon_i$ to perturb her/his moral preference accordingly, and report the noisy moral preference to the aggregator. Finally, RLDP can be achieved by the same way as VLDP when using Laplace mechanism. The reason is that the $\ell_1$ sensitivity of each voter's parameter  under RLDP is the same as that under VLDP, which will be proved in Theorem~\ref{thm1-algo-record-level}.

However, achieving RLDP by adding Laplace noise directly leads to limited utility even for weak privacy protection, as illustrated in our experiments in Section~\ref{sec:experiments}. Therefore, to pursuing better utility, we propose to adopt the functional mechanism \cite{zhang2012functional}, which enforces differential privacy by perturbing the object function of the optimization problem, rather than the optimized parameters (i.e., each voter's moral preference). Note that we can't adopt the functional mechanism to estimate the social preference parameter under voter-level privacy protection or centralized perturbation. This is because the social preference parameter is derived by averaging the preference parameters of all voters instead of the solution of the optimization problem. The functional mechanism itself is used for analyzing and solving the optimization problem.

Thus, we will consider the above four privacy protection paradigms to achieve differential privacy for ethical decision making of AI and will propose three algorithms which cover the above four paradigms. Specifically, the three algorithms are outlined as follows.

\begin{itemize}
    \item VLCP/RLCP algorithm via perturbing the average moral preference parameters of all voters by Laplace mechanism. 
    \item VLDP/RLDP algorithm via perturbing the moral preference parameters of each voter by Laplace mechanism.
    \item RLDP algorithm via perturbing the object function of each voter.
\end{itemize}
In the following subsections, we will introduce the proposed algorithms in detail.

\subsection{VLCP/RLCP: Perturbing the Average Moral Preference Parameters}\label{VLCP}

In this section, we propose an algorithm to achieve VLCP and RLCP by perturbing the average moral preference parameters estimated from maximum likelihood estimation.
Each voter $i$ obtains its parameter $\overline{\bm{\beta}}_i(D_i)$ according to Eq.~(\ref{eq-bar-beta-i}), which enforces the $\ell_1$-norm of $\overline{\bm{\beta}}_i(D_i)$ to be at most $B$. Then each voter $i$ sends it to the aggregator, and the aggregator computes the average estimate ${\overline{\bm{\beta}}}(D):=\frac{1}{N}\sum_{i=1}^N\overline{\bm{\beta}}_i (D_i)$, for  $D:=\{D_1,D_2,\cdots,D_N\}$. The $\ell_1$- sensitivity of $\overline{\bm{\beta}}$ with respect to neighboring datasets equals $2B/N$, since the maximal range that $\overline{\bm{\beta}}_i (D_i)$ can change is no greater than $2B$ by the triangle inequality (see Theorem~\ref{thm1-algo-user-level} for specific proofs). As shown in Algorithm~\ref{algo-user-level}, after computing the average parameters of all voters, a random noise vector $R$ will be drawn from the Laplace distribution $[Lap(\frac{2B}{N\epsilon}) ]^d$. Then, the perturbed preference parameter $\bm{\beta}^* (D)={\overline{\bm{\beta}}}(D)+\bm{R}$ is returned as the final social moral preference parameter. Note that we may suppress the argument $D_i$ in $\overline{\bm{\beta}}_i (D_i)$ and the argument $D$ in ${\overline{\bm{\beta}}}(D),\bm{\beta}^* (D)$ for simplicity.

\begin{algorithm}[ht]
\small
\setstretch{1} 
 \caption{\small{VLCP/RLCP: Perturbing the Average Moral Preference Parameters by Adding Laplace Noise}} \label{algo-user-level}
  \KwIn{voters pairwise comparison data $D=\{D_1,D_2,\cdots,D_N\}$, universal privacy parameter $\epsilon$, norm bound $B$} 
  \KwOut{noisy parameter $\bm{\beta}^*$}
    \For{each voter $i$ with data $D_i=\{\langle X_1^{(i)},Z_1^{(i)}\rangle,\cdots,\langle X_n^{(i)},Z_n^{(i)}\rangle\}$} 
    {Apply MLE to estimate $\overline{\bm{\beta}}_i$ of voter $i$ with log-likelihood function $\mathcal{L}(\bm{\beta}_i)=\sum_{j=1}^{n}\ln\Phi (\bm{\beta}_i^\top(X_j^{(i)}-Z_j^{(i)}))$, subject to $\|\overline{\bm{\beta}}_i\|_1\leq B$\;  voter $i$ sends $\overline{\bm{\beta}}_i$ to the aggregator\;}  
    The aggregator computes the average estimate: $\overline{\bm{\beta}}=\frac{1}{N}\sum_{i=1}^N\overline{\bm{\beta}}_i$\;
    The aggregator draws a random Laplace noise vector $\bm{R}\sim [ Lap(\frac{2B}{N\epsilon}) ]^d$\;
    \textbf{Return} $\bm{\beta}^*=\overline{\bm{\beta}}+\bm{R}$\; 
\end{algorithm}

\begin{thm}[Privacy of Algorithm~\ref{algo-user-level}] \label{thm1-algo-user-level}
Algorithm~\ref{algo-user-level} satisfies $\epsilon$-differential privacy for both VLCP and RLCP.
\end{thm}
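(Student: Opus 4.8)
The plan is to reduce the statement to a direct application of the Laplace mechanism recalled in Section~\ref{sec:preliminaries}. Since Algorithm~\ref{algo-user-level} releases $\overline{\bm{\beta}}(D)$ only after adding independent noise $Lap(\tfrac{2B}{N\epsilon})$ to each of its $d$ coordinates, it suffices to show that the $\ell_1$-sensitivity $\Delta$ of the query map $D \mapsto \overline{\bm{\beta}}(D)$ is bounded by $2B/N$ under \emph{both} the voter-neighboring and the record-neighboring notions. Once that is established, the chosen scale $\tfrac{2B}{N\epsilon}=\Delta/\epsilon$ is exactly the calibration demanded by the Laplace mechanism, and $\epsilon$-DP in the sense of Definition~\ref{defn-DP} follows for each paradigm.

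For VLCP I would take any two voter-neighboring datasets $D$ and $D'$, which by definition differ only in the records of a single voter, say voter $i$, whose data change from $D_i$ to $D_i'$. Every other voter contributes an identical parameter to the two averages, so all terms cancel except one:
\begin{align*}
\|\overline{\bm{\beta}}(D) - \overline{\bm{\beta}}(D')\|_1 = \tfrac{1}{N}\,\bigl\|\overline{\bm{\beta}}_i(D_i) - \overline{\bm{\beta}}_i(D_i')\bigr\|_1.
\end{align*}
Applying the triangle inequality together with the constraint $\|\overline{\bm{\beta}}_i(\cdot)\|_1 \le B$ that is built into Eq.~(\ref{eq-bar-beta-i}) bounds the right-hand side by $\tfrac{1}{N}(B+B)=2B/N$. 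This is the worst case (it is approached when the two constrained maximizers sit at antipodal points of the $\ell_1$-ball of radius $B$), so $\Delta = 2B/N$ for voter-neighboring datasets, and the Laplace mechanism yields $\epsilon$-DP for VLCP.

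For RLCP the crucial observation is that a record-neighboring pair is simply a \emph{special case} of a voter-neighboring pair, since altering one comparison of voter $i$ is one particular way of altering that voter's records. The same chain of inequalities therefore applies verbatim: we still only know that $\overline{\bm{\beta}}_i(D_i)$ and $\overline{\bm{\beta}}_i(D_i')$ each lie in the $\ell_1$-ball of radius $B$, giving $\Delta \le 2B/N$ once more. The point I expect to be the main obstacle, and the one worth flagging explicitly, is resisting the temptation to claim a \emph{smaller} sensitivity for RLCP: one might hope that changing a single record perturbs the maximum-likelihood estimate only slightly, but the constrained maximizer in Eq.~(\ref{eq-bar-beta-i}) can jump anywhere within the $\ell_1$-ball when even one comparison flips, so no sharper bound than $2B$ is available and it is genuinely the norm constraint $B$, not any continuity of the MLE, that controls the sensitivity. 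This is precisely why the calibration coincides for the two paradigms. I would close by remarking that, because record-neighboring pairs form a subset of voter-neighboring pairs, the $\epsilon$-DP guarantee already proved for VLCP immediately implies the RLCP guarantee, so the single invocation of the Laplace mechanism with scale $\tfrac{2B}{N\epsilon}$ certifies both claims at once.
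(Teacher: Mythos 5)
Your proof is correct, and its skeleton---bound the $\ell_1$-sensitivity of the map $D \mapsto \overline{\bm{\beta}}(D)$ by $2B/N$, then invoke the Laplace mechanism with scale $\frac{2B}{N\epsilon}$---is the same as the paper's. Where you genuinely diverge is in the treatment of the two neighboring notions. You establish only the upper bound $\Delta \le 2B/N$ (cancellation of the $N-1$ unchanged voters, then the triangle inequality against the constraint $\|\overline{\bm{\beta}}_i\|_1 \le B$ from Eq.~(\ref{eq-bar-beta-i})), and you obtain RLCP as a pure corollary of VLCP by observing that record-neighboring pairs form a subset of voter-neighboring pairs; this is the leanest route to the theorem, since differential privacy only ever requires an upper bound on sensitivity. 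The paper instead constructs explicit worst-case \emph{record}-neighboring datasets---one flipped comparison drives the constrained MLE from roughly $(\frac{B}{d},\ldots,\frac{B}{d})$ to $(-\frac{B}{d},\ldots,-\frac{B}{d})$, antipodal points of the $\ell_1$-ball---to show that the sensitivity \emph{equals} $2B/N$ for both notions, and then treats the voter-level case as similar. That extra construction is not needed for the privacy claim itself, but it buys the tightness statement behind the paper's remark in Section~\ref{sec:system-model} that RLCP admits no noise reduction relative to VLCP under the Laplace mechanism; this is exactly the phenomenon you flag qualitatively when you note that a single flipped record can move the constrained maximizer anywhere in the $\ell_1$-ball, though you stop short of exhibiting the witnessing datasets.
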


\begin{proof}
First, we prove that  the sensitivities $\Delta_{VLCP} $ and $\Delta_{RLCP}$ with respect to voter-neighboring datasets and record-neighboring datasets are both $2B/N$ under centralized perturbation, where
\begin{small}
\begin{align}    
\Delta_{VLCP}  &:= \max_{\textrm{$D$ and $D'$ differing in a single voter's records}} \|\overline{\bm{\beta}}(D) - \overline{\bm{\beta}}(D')\|_1,\\
\Delta_{RLCP}  &:= \max_{\textrm{$D$ and $D'$ differing in a single voter's one record}} \|\overline{\bm{\beta}}(D) - \overline{\bm{\beta}}(D')\|_1 .
\end{align}
\end{small}

Let $D=\{D_1,\cdots,D_i,\cdots,D_N\}$ and $D'=\{D_1,\cdots,D'_i,\cdots,D_N\}$ be record-neighboring datasets such that $D$ and $D'$ differ in only a single voter $i$'s one record, i.e., $D_i$ and $D'_i$. We assume {$D_i := \{\langle X_1^{(i)},Z_1^{(i)}\rangle, \langle X_2^{(i)},Z_2^{(i)}\rangle, \cdots, \langle X_n^{(i)},Z_n^{(i)}\rangle$ \}}, where each dimension of $X_1^{(i)} - Z_1^{(i)} $ is $1$, and each dimension of $X_j^{(i)} - Z_j^{(i)} $ for $j \in \{2, \ldots, n\}$ is close to $0$, and {$D_i' := \{\langle {X_1^{(i)}}',{Z_1^{(i)}}'\rangle, \langle X_2^{(i)},Z_2^{(i)}\rangle, \cdots, \langle X_n^{(i)},Z_n^{(i)}\rangle$ \}}, where each dimension of $X_1^{(i)} - Z_1^{(i)} $ is $-1$, and each dimension of $X_j^{(i)} - Z_j^{(i)} $ for $j \in \{2, \ldots, n\}$ is close to $0$. Therefore, based on Eq.~(\ref{eq-bar-beta-i}), we know that each of the $d$ dimensions of $\overline{\bm{\beta}}_i (D_i)$ and $\overline{\bm{\beta}}_i (D_i')$ are close to $\frac{B}{d}$ and $-\frac{B}{d}$, respectively. Given ${\overline{\bm{\beta}}}(D):=\frac{1}{N}\sum_{i=1}^N\overline{\bm{\beta}}_i (D_i)$ and ${\overline{\bm{\beta}}}(D'):=\frac{1}{N}\overline{\bm{\beta}}_i (D_i')+\frac{1}{N}\sum_{j\in\{1,2,\ldots,N\}\setminus\{i\}} \overline{\bm{\beta}}_j (D_j)$, the $\ell_1$-norm difference between ${\overline{\bm{\beta}}}(D)$ and ${\overline{\bm{\beta}}}(D')$ equals $\frac{1}{N}\overline{\bm{\beta}}_i (D_i)-\frac{1}{N}\overline{\bm{\beta}}_i (D_i')$, which can be made to be arbitrarily close to $\frac{2B}{N}$.  Hence, the $\ell_1$-norm sensitivity of ${\overline{\bm{\beta}}} $ with respect to record-neighboring datasets is $\frac{2B}{N}$.

In a way similar to the above argument, the $\ell_1$-norm sensitivity of ${\overline{\bm{\beta}}} $ with respect to voter-neighboring datasets is $\frac{2B}{N}$. For voter-neighboring datasets $D=\{D_1,\cdots,D_i,\cdots,D_N\}$ and $D'=\{D_1,\cdots,D'_i,\cdots,D_N\}$, voter $i$'s dataset variants $D_i$ and $D_i'$ can differ arbitrarily.


Thus, we obtain $\Delta _{RLCP} = \Delta _{VLCP}  = 2B/N$. Recall that Algorithm~\ref{algo-user-level} considers the centralized perturbation paradigm and adds noises drawn from $[ Lap(\frac{2B}{N\epsilon}) ]^d$ to the average parameters of all voters. From Laplace mechanism~\cite{dwork2006calibrating}, it can be proved that Algorithm~\ref{algo-user-level} satisfies $\epsilon$-differential privacy for both VLCP and RLCP.
\end{proof}


We now analyze the utility of Algorithm~\ref{algo-user-level}. At the end of Algorithm~\ref{algo-user-level}, the aggregator obtains the parameter vector of moral preference $\bm{\beta^*}$, which can be understood as a noisy version of the true result $\overline{\bm{\beta}}$. We consider the utility of Algorithm~\ref{algo-user-level} by analyzing the probability that the $p$-norm of the estimation error $\bm{\beta^*}  - \overline{\bm{\beta}}$ is no greater than a given quantity $\alpha$. To this end, we note that  $\bm{\beta^*}  - \overline{\bm{\beta}}$ follows a $d$-dimensional multivariate Laplace distribution, with each dimension being  an independent zero-mean Laplace random variable with scale $\frac{2B}{N\epsilon}$. With $h(\bm{\mu})$ denoting the probability density function of random variable $\bm{\beta^*}  - \overline{\bm{\beta}}$ being a given $\bm{\mu}$, the expression of $h(\bm{\mu})$ involves Bessel function.  
Hence, for general $p$-norm, it is difficult to compute $\bp{\left \| \bm{\beta^*} - \overline{\bm{\beta}} \right \|_p\leq \alpha} = \int_{\bm{\mu}:\,\left \|\bm{\mu} \right \|_p\leq \alpha} h(\bm{\mu}) \, \text{d}\bm{\mu} $. Below we consider the special case of $\infty$-norm following~\cite{dwork2014algorithmic}, where we use the union bound to present the utility result of Algorithm~\ref{algo-user-level} in the following Theorem~\ref{thm-algo-user-level-utility}.
 

\begin{thm}[Utility of Algorithm~\ref{algo-user-level}] \label{thm-algo-user-level-utility}
For any $\gamma \in (0,1)$, Algorithm~\ref{algo-user-level} ensures $\left\| \bm{\beta^*}  - \overline{\bm{\beta}} \right\|_\infty\leq \alpha$ with probability at least $1-\gamma\,$ for $\alpha := \frac{2B}{N\epsilon} \cdot \ln(d/\gamma)$.
\end{thm}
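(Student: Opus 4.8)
The plan is to observe that the estimation error in question is nothing other than the injected noise itself: by construction $\bm{\beta}^* - \overline{\bm{\beta}} = \bm{R}$, where $\bm{R} \sim [Lap(\tfrac{2B}{N\epsilon})]^d$ has $d$ independent zero-mean Laplace coordinates, each with scale $\lambda := \tfrac{2B}{N\epsilon}$ (the scale being dictated by the sensitivity $2B/N$ established in Theorem~\ref{thm1-algo-user-level}). So the statement reduces to a tail bound on the $\infty$-norm of a vector of i.i.d.\ Laplace variables. First I would record the one-dimensional Laplace tail: for $X \sim Lap(\lambda)$, integrating the density $\tfrac{1}{2\lambda}e^{-|x|/\lambda}$ over the region $|x| > t$ gives the exact identity $\bp{|X| > t} = e^{-t/\lambda}$ for every $t \ge 0$.

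Next I would rewrite the $\infty$-norm event as a union over the $d$ coordinates, namely $\{\|\bm{R}\|_\infty > \alpha\} = \bigcup_{k=1}^{d}\{|R_k| > \alpha\}$, and then apply the union bound together with the single-coordinate tail to obtain $\bp{\|\bm{R}\|_\infty > \alpha} \le \sum_{k=1}^{d}\bp{|R_k| > \alpha} = d\,e^{-\alpha/\lambda}$. Finally I would calibrate $\alpha$ so that this failure probability is at most $\gamma$: setting $d\,e^{-\alpha/\lambda} \le \gamma$ and solving yields $\alpha \ge \lambda\ln(d/\gamma) = \tfrac{2B}{N\epsilon}\ln(d/\gamma)$, which is precisely the stated threshold. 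Taking complements gives $\bp{\|\bm{\beta}^* - \overline{\bm{\beta}}\|_\infty \le \alpha} \ge 1-\gamma$, completing the argument.

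I do not anticipate a genuine obstacle here; this is the standard max-of-Laplaces concentration used to bound the Laplace mechanism under the $\infty$-norm, in the spirit of Dwork--Roth. The only points that merit mild care are (i) pinning down the correct scale $\lambda = 2B/(N\epsilon)$ rather than $2B/N$ or $1/\epsilon$ alone, since it is inherited from the sensitivity computation, and (ii) using the \emph{exact} tail $e^{-t/\lambda}$ instead of a looser Chernoff-style estimate, because the clean closed form $\alpha = \tfrac{2B}{N\epsilon}\ln(d/\gamma)$ relies on the exact constant surviving through the union bound. As the excerpt already notes, the $\infty$-norm is chosen deliberately because the joint density of $\bm{R}$ involves a Bessel function, making a direct $p$-norm integral intractable, whereas the coordinatewise union bound sidesteps this entirely.
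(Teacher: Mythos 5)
Your proposal is correct and follows essentially the same route as the paper's own proof: both identify the error $\bm{\beta^*}-\overline{\bm{\beta}}$ with the Laplace noise vector $\bm{R}\sim[Lap(\tfrac{2B}{N\epsilon})]^d$, bound $\bp{\|\bm{R}\|_\infty\geq\alpha}$ by a union bound over the $d$ coordinates using the exact one-dimensional tail, and calibrate $\alpha=\tfrac{2B}{N\epsilon}\ln(d/\gamma)$ so the failure probability equals $\gamma$. Your write-up merely makes the tail identity $\bp{|X|>t}=e^{-t/\lambda}$ and the final solving step explicit, which the paper leaves implicit.
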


\begin{proof}
With $\bm{R}:=\bm{\beta^*} - \overline{\bm{\beta}}$, we know that $\bm{R}$ follows the probability distribution $ [ Lap(\frac{2B}{N\epsilon}) ]^d$. Then denoting the $d$ dimensions of $\bm{R}$ by $R[1], R[2], \ldots, R[d]$, we have
\begin{align}
     & \bp{\left \| \bm{\beta^*} - \overline{\bm{\beta}} \right \|_\infty\geq \alpha}
  = \bp{\left \| \bm{R} \right \|_\infty\geq \alpha} \nonumber\\
    &= \bp{ \left(\max_{k\in\{1,2,\ldots, d\}} | R[k] | \right) \hspace{-2pt}\geq\hspace{-2pt} \alpha  } \hspace{-2pt}\leq\hspace{-2pt} \sum_{k\in\{1,2,\ldots, d\}} \bp{ | R[k] | \hspace{-2pt}\geq\hspace{-2pt} \alpha  } \nonumber\\
    & =  \textstyle{d\cdot \bp{|Lap(\frac{2B}{N\epsilon})| \geq \alpha} } =  \gamma \text{ for $\alpha := \frac{2B}{N\epsilon} \cdot \ln(d/\gamma)$},\nonumber
\end{align}
where the step of ``$\leq$'' uses the union bound.
\end{proof}

Theorem~\ref{thm-algo-user-level-utility} shows that the utility   of Algorithm~\ref{algo-user-level} decreases as $d$ increases. This is confirmed by Figure~\ref{syn-vary-d} for experimental results.

From Theorem~\ref{thm1-algo-user-level}, both VLCP and RLCP assume a universal privacy parameter for all voters; i.e., the same privacy protection level for all voters. In the following, we focus on proposing algorithms with distributed perturbation so that voters can choose personalized privacy parameters to achieve different privacy protection levels.

\subsection{VLDP/RLDP: Perturbing the Moral Preference Parameter of Each Voter}\label{sec:record-level-lap}

This section introduces an algorithm to achieve VLDP and RLDP by perturbing the moral preference parameter of each voter under Laplace mechanism. Algorithm~\ref{algo-record-level} shows the pseudo-code of perturbing the preference parameter of each voter $i$ with a personalized privacy parameter $\epsilon _i$.
Each voter $i$ obtains its parameter $\overline{\bm{\beta}}_i(D_i)$ according to Eq.~(\ref{eq-bar-beta-i}), which enforces the $\ell_1$-norm of $\overline{\bm{\beta}}_i(D_i)$ to be at most $B$.
Then, the sensitivity of each voter's parameter $\bm{\beta}_i$ will $2B$ since the maximal changing range of $\bm{\beta}_i$ is no greater than $2B$ with respect to neighboring datasets (see Theorem~\ref{thm1-algo-record-level} for the specific proofs). Then, the parameter of voter $i$ will be perturbed as $\overline{\bm{\beta}}_i^*(D_i)=\overline{\bm{\beta}}_i(D_i)+ \bm{R}$, where $\bm{R}$ is a random Laplace noise vector drawn from $[Lap(2B/\epsilon _i)]^d$.

After obtaining the noisy parameter of each voter $i$ in a distributed way, the final social moral preference parameter can be computed by averaging all voters' parameters, that is $\bm{\beta}^*(D)=\frac{1}{N}\sum_{i=1}^N\overline{\bm{\beta}}_i^*(D_i)$ for $D:=\{D_1,\cdots,D_i,\cdots,D_N\}$. Note that we may suppress the argument $D_i$ in $\overline{\bm{\beta}}_i (D_i),\overline{\bm{\beta}}_i^*(D_i)$ and the argument $D$ in $\bm{\beta}^* (D)$ for simplicity.

\begin{algorithm}[t]
\small
\setstretch{1} 
 \caption{\small{VLDP/RLDP: Perturbing the Preference Parameter of Each Voter by Adding Laplace Noise}} \label{algo-record-level}
  \KwIn{\mbox{voter $i$'s dataset $D_i = \{\langle X_j,Z_j\rangle |j\in\{1,2,\cdots,n\}\}$}, personalized privacy parameter $\epsilon_i$, norm bound $B$}
  \KwOut{noisy parameter $\overline{\bm{\beta}}_i^*$ of voter $i$}
     Apply MLE to estimate $\bm{\beta}_i$ of voter $i$ with log-likelihood function $\mathcal{L}(\bm{\beta}_i)=\sum_{j=1}^{n}\ln\Phi (\bm{\beta}_i^\top(X_j-Z_j))$, subject to $\|\overline{\bm{\beta}}_i\|_1\leq B$\;
    Draw a random Laplace noise vector $\bm{R} \sim [Lap(2B/\epsilon _i)]^d$\;
    \textbf{return} $\overline{\bm{\beta}}_i^*=\overline{\bm{\beta}}_i + \bm{R}$\;
\end{algorithm}

\begin{thm}[Privacy of Algorithm~\ref{algo-record-level}] \label{thm1-algo-record-level}
For each voter $i$, Algorithm~\ref{algo-record-level} satisfies $\epsilon _i$-differential privacy for both VLDP and RLDP.
\end{thm}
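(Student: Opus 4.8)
The plan is to reduce both the VLDP and RLDP guarantees to a single application of the Laplace mechanism, so the crux is to pin down the $\ell_1$-norm sensitivity of the \emph{per-voter} parameter $\overline{\bm{\beta}}_i(D_i)$ under each notion of neighboring. In contrast to Algorithm~\ref{algo-user-level}, where the aggregator perturbs the averaged parameter $\overline{\bm{\beta}}(D)$ (giving sensitivity $2B/N$), here each voter perturbs her/his own local parameter before any averaging, so the quantity whose sensitivity controls the noise scale is $\overline{\bm{\beta}}_i$ itself. Concretely, I would define
\begin{align}
\Delta_{VLDP} &:= \max_{D_i,D_i' \text{ voter-neighboring}} \|\overline{\bm{\beta}}_i(D_i) - \overline{\bm{\beta}}_i(D_i')\|_1, \nonumber\\
\Delta_{RLDP} &:= \max_{D_i,D_i' \text{ record-neighboring}} \|\overline{\bm{\beta}}_i(D_i) - \overline{\bm{\beta}}_i(D_i')\|_1, \nonumber
\end{align}
and argue that both equal $2B$.

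For the upper bound I would invoke only the norm constraint in Eq.~(\ref{eq-bar-beta-i}): since $\|\overline{\bm{\beta}}_i(D_i)\|_1 \leq B$ holds for \emph{every} dataset $D_i$, irrespective of how neighbors are defined, the triangle inequality immediately yields $\|\overline{\bm{\beta}}_i(D_i) - \overline{\bm{\beta}}_i(D_i')\|_1 \leq \|\overline{\bm{\beta}}_i(D_i)\|_1 + \|\overline{\bm{\beta}}_i(D_i')\|_1 \leq 2B$, and this bound applies verbatim to both $\Delta_{VLDP}$ and $\Delta_{RLDP}$. For the matching lower bound, the main point — and the step I expect to be the chief obstacle — is showing that even a \emph{single-record} change (the RLDP case) can drive the learned parameter all the way from one corner of the $\ell_1$ ball to the opposite one, rather than merely shifting it slightly. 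Here I would reuse the explicit construction already built in the proof of Theorem~\ref{thm1-algo-user-level}: taking $D_i$ and $D_i'$ that differ only in their first record, with $X_1^{(i)}-Z_1^{(i)}$ having every coordinate equal to $+1$ under $D_i$ and $-1$ under $D_i'$ while the remaining differences stay close to $0$, forces each coordinate of $\overline{\bm{\beta}}_i(D_i)$ toward $B/d$ and of $\overline{\bm{\beta}}_i(D_i')$ toward $-B/d$. Consequently $\|\overline{\bm{\beta}}_i(D_i) - \overline{\bm{\beta}}_i(D_i')\|_1$ can be made arbitrarily close to $2B$ using record-neighboring inputs.

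Combining the two directions gives $\Delta_{RLDP} = 2B$; and since record-neighboring datasets form a special case of voter-neighboring ones, $\Delta_{RLDP} \leq \Delta_{VLDP} \leq 2B$, forcing $\Delta_{VLDP} = \Delta_{RLDP} = 2B$ as well. Finally, Algorithm~\ref{algo-record-level} draws its noise from $[Lap(2B/\epsilon_i)]^d$, whose scale is exactly $\Delta/\epsilon_i$ with $\Delta = 2B$; applying the Laplace mechanism coordinatewise then certifies $\epsilon_i$-differential privacy for voter $i$ under both the voter-neighboring (VLDP) and record-neighboring (RLDP) notions, which completes the argument.
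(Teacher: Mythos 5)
Your proposal is correct and follows essentially the same route as the paper's own proof: pin down the per-voter $\ell_1$-sensitivity as $2B$ (upper bound from the norm constraint $\|\overline{\bm{\beta}}_i\|_1\le B$ plus triangle inequality, lower bound from the same one-record construction driving the coordinates of $\overline{\bm{\beta}}_i$ from $B/d$ to $-B/d$), then invoke the Laplace mechanism with scale $2B/\epsilon_i$. Your only refinements — stating the triangle-inequality upper bound inside the proof and deducing the VLDP case via the nesting $\Delta_{RLDP}\le\Delta_{VLDP}\le 2B$ rather than the paper's ``in a way similar to the above argument'' — are minor tightenings of the same argument, not a different approach.
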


\begin{proof}

First, we prove that for voter $i$, the sensitivities $\Delta_{VLDP}^{(i)}$ and $\Delta_{RLDP}^{(i)}$ with respect to voter-neighboring and record-neighboring datasets are both $2B$ under distributed perturbation, where
\begin{small}
\begin{align}  
\Delta_{VLDP}^{(i)}  &:= \max_{D_i,\,D_i'} \|\overline{\bm{\beta}}_i(D_i) - \overline{\bm{\beta}}_i(D_i')\|_1,\\
\Delta_{RLDP}^{(i)}  &:= \max_{\textrm{$D_i$ and $D_i'$ differing in one record}} \|\overline{\bm{\beta}}_i(D_i) - \overline{\bm{\beta}}_i(D_i')\|_1.
\end{align}
\end{small}


For voter $i$, let $D_i$ and $D_i'$ be two record-neighboring datasets. Specifically, for {$D_i := \{\langle X_1^{(i)},Z_1^{(i)}\rangle, \langle X_2^{(i)},Z_2^{(i)}\rangle, \cdots, \langle X_n^{(i)},Z_n^{(i)}\rangle$ \}}, where each dimension of $X_1^{(i)} - Z_1^{(i)} $ is $1$, and each dimension of $X_j^{(i)} - Z_j^{(i)} $ for $j \in \{2, \ldots, n\}$ is close to $0$, we know from Eq.~(\ref{eq-bar-beta-i}) that each of the $d$ dimensions of $\overline{\bm{\beta}}_i (D_i)$ is close to $\frac{B}{d}$. For {$D_i' := \{\langle {X_1^{(i)}}',{Z_1^{(i)}}'\rangle, \langle X_2^{(i)},Z_2^{(i)}\rangle, \cdots, \langle X_n^{(i)},Z_n^{(i)}\rangle$ \}}, where each dimension of $X_1^{(i)} - Z_1^{(i)} $ is $-1$, and each dimension of $X_j^{(i)} - Z_j^{(i)} $ for $j \in \{2, \ldots, n\}$ is close to $0$, we know from Eq.~(\ref{eq-bar-beta-i}) that each of the $d$ dimensions of $\overline{\bm{\beta}}_i (D_i')$ is close to $-\frac{B}{d}$. Hence, the $\ell_1$-norm difference between $\overline{\bm{\beta}}_i (D_i)$ and $\overline{\bm{\beta}}_i (D_i')$ can be made to be arbitrarily close to $2B$. Hence, for voter $i$, the $\ell_1$-norm sensitivity of $\overline{\bm{\beta}}_i$ with respect to record-neighboring datasets is $2B$.

In a way similar to the above argument, the $\ell_1$-norm sensitivity of $\overline{\bm{\beta}}_i$ with respect to voter-neighboring datasets is $2B$. Note that voter $i$'s voter-neighboring datasets $D_i$ and $D_i'$ can differ arbitrarily. 


Therefore, we obtain   $\Delta _{VLDP}^{(i)} = \Delta _{RLDP}^{(i)}  = 2B$. 
Since Algorithm~\ref{algo-record-level} adds noises drawn from $[Lap(2B/\epsilon _i)]^d$ to the parameter of each voter $i$, thus it satisfies $\epsilon _i$-differential privacy for both VLDP and RLDP based on Laplace mechanism~\cite{dwork2006calibrating}.
\end{proof}

\begin{thm}[Utility of Algorithm~\ref{algo-record-level}] \label{thm-algo-record-level-utility}
For any $\gamma \in (0,1)$, Algorithm~\ref{algo-record-level} ensures $\left\| \overline{\bm{\beta}}_i^*  - \overline{\bm{\beta}}_i \right\|_\infty\leq \alpha$ with probability at least $1-\gamma\,$ for $\alpha := \frac{2B}{\epsilon _i} \cdot \ln(d/\gamma)$.
\end{thm}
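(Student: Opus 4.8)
The plan is to reuse, essentially verbatim, the union-bound argument from the proof of Theorem~\ref{thm-algo-user-level-utility}, since the quantity $\overline{\bm{\beta}}_i^* - \overline{\bm{\beta}}_i$ controlled here is, by the construction in Algorithm~\ref{algo-record-level}, exactly the injected Laplace noise. First I would observe that the return line $\overline{\bm{\beta}}_i^* = \overline{\bm{\beta}}_i + \bm{R}$ of Algorithm~\ref{algo-record-level} gives $\overline{\bm{\beta}}_i^* - \overline{\bm{\beta}}_i = \bm{R}$, where $\bm{R} \sim [Lap(2B/\epsilon_i)]^d$ is a vector of $d$ independent zero-mean Laplace coordinates, each of scale $2B/\epsilon_i$ (this scale reflecting the sensitivity $\Delta_{VLDP}^{(i)} = \Delta_{RLDP}^{(i)} = 2B$ established in Theorem~\ref{thm1-algo-record-level}). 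Denoting the coordinates by $R[1], \ldots, R[d]$, the event $\|\bm{R}\|_\infty \geq \alpha$ is precisely the event that some coordinate exceeds $\alpha$ in magnitude.

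Next I would apply the union bound exactly as before, writing $\bp{\|\bm{R}\|_\infty \geq \alpha} = \bp{\max_{k} |R[k]| \geq \alpha} \leq \sum_{k=1}^{d} \bp{|R[k]| \geq \alpha} = d \cdot \bp{|Lap(2B/\epsilon_i)| \geq \alpha}$. The only real computation is the one-dimensional Laplace tail bound: for $L \sim Lap(\lambda)$ one has $\bp{|L| \geq \alpha} = e^{-\alpha/\lambda}$, so with $\lambda = 2B/\epsilon_i$ this becomes $d \cdot e^{-\alpha \epsilon_i/(2B)}$. Setting the bound equal to $\gamma$ and solving, $e^{-\alpha \epsilon_i/(2B)} = \gamma/d$, i.e. $\alpha = \frac{2B}{\epsilon_i}\ln(d/\gamma)$, which is exactly the claimed threshold. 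Taking complements then yields $\|\overline{\bm{\beta}}_i^* - \overline{\bm{\beta}}_i\|_\infty \leq \alpha$ with probability at least $1-\gamma$.

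There is essentially no hard step: this is the single-voter analogue of Theorem~\ref{thm-algo-user-level-utility}, and is in fact slightly simpler, since we bound one voter's own perturbation rather than a society-level average, so no $1/N$ averaging factor enters and the scale is simply $2B/\epsilon_i$. The one point I would be careful to state clearly is which quantity the theorem controls: it is the per-voter deviation $\overline{\bm{\beta}}_i^* - \overline{\bm{\beta}}_i$ driven by voter $i$'s own Laplace noise, \emph{not} the deviation of the aggregated social parameter $\bm{\beta}^*(D) = \frac{1}{N}\sum_{i=1}^{N}\overline{\bm{\beta}}_i^*(D_i)$; bounding the latter would require combining the $N$ independent Laplace vectors and is not what is asserted here.
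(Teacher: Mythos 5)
Your proposal is correct and follows exactly the route the paper takes: the paper's own proof simply notes that $\overline{\bm{\beta}}_i^* - \overline{\bm{\beta}}_i = \bm{R} \sim [Lap(2B/\epsilon_i)]^d$ and defers to the union-bound argument of Theorem~\ref{thm-algo-user-level-utility}, which is precisely the computation you wrote out in full (Laplace tail $e^{-\alpha\epsilon_i/(2B)}$, union bound over $d$ coordinates, solve for $\alpha$). Your explicit remark distinguishing the per-voter deviation from the aggregated social parameter is a correct and useful clarification, but the substance of the argument is identical to the paper's.
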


\begin{proof}
With $\bm{R}:=\overline{\bm{\beta}}_i^*  - \overline{\bm{\beta}}_i$, we know that $\bm{R}$ follows the probability distribution $[Lap(2B/\epsilon _i)]^d$. Then, by referring to the proof of Theorem~\ref{thm-algo-user-level-utility}, we can easily obtain the conclusion of Theorem~\ref{thm-algo-record-level-utility}. We omit the specific proof due to the space limitation.
\end{proof}

Our experiments in Section~\ref{sec:experiments} will show that achieving VLDP/RLDP by adding Laplace noise to each voter's parameter will lead to a limited utility. Recall that VLDP (voter-level privacy protection with distributed perturbation) itself is the strongest notion among four paradigms focused in this paper. Thus, it accordingly has the worst data utility. Clearly, we can know that the definition of RLDP is relatively weaker than VLDP. However, RLDP holds the same low data utility as the VLDP when using Laplace mechanism. Therefore, in the next section, we focus on proposing a novel algorithm which can achieve RLDP while ensuring a higher data utility.

\subsection{RLDP: Perturbing the Object Function of Each Voter}\label{sec:record-level-func}


To further enhance the data utility, an alternative approach is to perturb the objective function of each voter when conducting MLE optimization. This is the functional mechanism of \cite{zhang2012functional}. For comparison, we will show in Section~\ref{sec:experiments} (Figures~\ref{syn-compare123} and~\ref{real}(d)) that the functional mechanism can provide better utility than adding Laplace noise directly for record-level differential privacy.

In view of Eq.~(\ref{log-func}), we define $V_{j}^{(i)}: = X_{j}^{(i)} - Z_{j}^{(i)}$.
We split the log-likelihood function for voter $i\in \{1,2,\cdots,N\}$ as
\begin{align}\label{user-func}
\textstyle{\mathcal{L}(\bm{\beta}_i, V_j^{(i)}|_{j \in \{1,2,\ldots,n\}})=\sum_{j=1}^{n}\ln\Phi (\bm{\beta}_i^\top V_j^{(i)} ).} 
\end{align}
Later, we will focus on one arbitrary voter $i$, and omit the notation $i$ in the analysis for simplicity. Since the final parameter $\bm{\beta}$ is just the average of each $\bm{\beta}_i$, we can process each $\bm{\beta}_i$ one by one. 
 We then split the objective function (\ref{user-func}) for each record $j\in \{1,2,\cdots,n\}$ as
\begin{align}\label{obj-func-2}
\textstyle{f(V_j,\bm{\beta})=\ln\Phi (\bm{\beta}^\top V_j).}  
\end{align}
Therefore, the optimal model parameter $\bm{\beta}^o$ for each voter is:
\begin{align}\label{obj-func}
 \textstyle{\bm{\beta}^o = \argmaxa\limits_{\bm{\beta}} \sum_{j=1}^{n}f(V_j,\bm{\beta}).}    
\end{align}

We denote a $d$-dimensional vector $\bm{\beta}$ by $(\beta[1],\cdots,\beta[d])$. By the Stone-Weierstrass Theorem \cite{rudin1964principles}, any continuous and differentiable $f(V_j,\bm{\beta})$ can always be written as a (potentially infinite) polynomial of $(\beta[1],\cdots,\beta[d])$. For some $W\in[0,\infty]$, we have
\begin{align}\label{poly}
 \textstyle{f(V_j,\bm{\beta})=\sum_{w=0}^{W}\sum_{\phi \in \Psi_w} \lambda_{\phi_{,V_j}}\phi(\bm{\beta}),}    
\end{align}
where $\lambda_{\phi_{,V_j}} \in \mathbb{R}$ is the coefficient of $\phi(\bm{\beta})$ in the polynomial. $\phi(\bm{\beta})$ denotes the product of $\beta[1],\cdots,\beta[d]$ for some $c_1,\cdots,c_d\in \mathbb{N}$ (the set of non-negative integers). $\Psi_w(w\in \mathbb{N})$ contains the set of all products of $\beta[1],\cdots,\beta[d]$ with degree $w$, that is\\
$\Psi_w=\left\{ (\beta[1])^{c_1}(\beta[2])^{c_2}\cdots(\beta[d])^{c_d} \mid \sum_{l=1}^d c_l=w \right\}$.



Thus, we will express the objective function (\ref{obj-func-2}) for each user $i$ and each record $j$ as a polynomial like Eq.~(\ref{poly}). 
Let $f_1$ and $g_1$ be two functions defined as follows:
\begin{align}
    g_1(V_j,\bm{\beta})=\bm{\beta}^\top V_j \text{ and } f_1(z)=\ln\Phi(z).
\end{align}
Then we have $f(V_j,\bm{\beta})=f_1(g_1(V_j,\bm{\beta}))$. 

Using the \textit{Taylor expansion} at $0$, we write $f(V_j,\bm{\beta})$ as  $\tilde{f}_D(\bm{\beta})$ below: 
\begin{align}\label{eqn-4}
  \textstyle{\tilde{f}_D(\bm{\beta})=\sum_{j=1}^{n}\sum_{k=0}^{\infty}\frac{f_1^{(k)}(0)}{k!}(\bm{\beta}^\top V_j)^k. }   
\end{align}


Then we adopt an approximation approach to reduce the degree of the summation. In particular, we only use the value of $f_1^{(k)}(0)$ for $k=0,1,2$, where $f_1^{(k)}$ denotes the $k$-order derivative. Specifically, we can compute $f_1^{(0)}(0)=\ln\frac{1}{2}$. From $f_1^{(1)}(z)=\frac{{\Phi}^{'}(z)}{\Phi(z)} $, we get $f_1^{(1)}(0)=\sqrt{2/\pi}$. From $f_1^{(2)}(z)=\frac{{\Phi}^{''}(z)\cdot \Phi(z) - {\Phi}^{'}(z)\cdot {\Phi}^{'}(z)}{\Phi^2(z)} $,   we have $f_1^{(2)}(0)=-2/\pi$. Thus, we approximate $\tilde{f}_D(\bm{\beta})$ by 
\begin{align}\label{eqn-5}
 \textstyle{\hat{f}_D(\bm{\beta}):=\sum_{j=1}^{n}\sum_{k=0}^{2}\frac{f_1^{(k)}(0)}{k!}(\bm{\beta}^\top V_j)^k.}    
\end{align}


For each $i\in\{1,2,\cdots,N\}$, $j\in\{1,2,\cdots,n\}$, we preprocess the data to make $\|X_j^{(i)}\|_2\leq 1/2 $ and $\|Z_j^{(i)}\|_2\leq 1/2 $. Then $\|V_j^{(i)}\|_2\leq 1 $ by triangle inequality. For simplicity, we suppress both $i$ and $j$, and write $V \in D$ to mean $V \in \{V' : X' - Z' | \langle X',Z'\rangle \in D \}$. Given $\|V \|_2 = \sqrt{\sum_{k=1}^{d}(V[k])^2}\leq 1$,   we have $ \sum_{k=1}^{d}V[k]\leq \sqrt{d}$ by Cauchy-Schwarz inequality, where $V[k]$ denotes the $k$-th dimension of the vector $V$. Thus, from Lemma~1 in \cite{zhang2012functional} and Eq.~(\ref{eqn-5}), we can compute the $\ell_1$-sensitivity $\Delta$ of the coefficient vector for $\beta[1],\cdots,\beta[d]$ in Eq.~(\ref{eqn-4}) with respect to record-neighboring datasets as
\begin{align} \label{sen-func}
    \Delta &= \textstyle{  2\cdot \underset{V \in D}\max\sum_{w=1}^{W}\sum_{\phi \in \Psi_w} \left \| \lambda_{\phi_{,V}}\right \|_1} \nonumber\\
    &= \textstyle{ 2\cdot \underset{V \in D}\max\left ( \frac{f_1^{1}(0)}{1!}\sum_{k=1}^{d}V[k] + \frac{f_1^{2}(0)}{2!}\sum_{1\leq k,l\leq d} V[k]V[l] \right)}\nonumber\\
    &\leq \textstyle{  2\left(\sqrt{2d/\pi}+d/\pi\right)}, \nonumber
    \text{which we denote by $\Delta_{\text{upper}}$ below.}
\end{align}

\begin{algorithm}[t]
\small
\setstretch{1} 
 \caption{\small{RLDP: Perturbing the Object Function of Each Voter}} \label{algo-record-level-func}
  \KwIn{\mbox{voter $i$'s dataset $D_i = \{\langle X_j,Z_j\rangle |j\in\{1,2,\cdots,n\}\}$} with $d$-dimensional vector $V_j:=X_j - Z_j$, objective function $f_{D_i}(\bm{\beta}_i)$, personalized privacy parameter $\epsilon_i$} 
  \KwOut{noisy parameter $\bm{\beta}_i^*$ of voter $i$}
  
    
    Decompose cost function as $f(V_j,\bm{\beta}_i)=f_1(g_1(V_j,\bm{\beta}_i))$\;
    Build an approximate objective function $\hat{f}_{D_i}(\bm{\beta}_i)$, such that $\hat{f}_{D_i}(\bm{\beta}_i)=\sum_{j=1}^{n}\sum_{k=0}^{2}\frac{f_1^{(k)}(0)}{k!}(g_1(V_j,\bm{\beta}_i))^k$\;
    Set $\Delta_{\text{upper}} = 2\sqrt{2d/\pi}+2d/\pi$\;
    \For{each $0\leq w \leq 2$}{
    \For{each $\phi \in \Psi_w$}
    {Compute $\lambda_\phi=\sum_{V_j\in {D_i}}\lambda_{\phi_{,V_j}}+Lap(\Delta_{\text{upper}}/\epsilon_i)$\;}}
    Let $f_{D_i}^*(\bm{\beta}_i)=\sum_{w=0}^{2}\sum_{\phi \in \Psi_w} \lambda_{\phi} \phi(\bm{\beta}_i)$\;
    $\bm{\beta}_i^*=\argmaxa\limits_{\bm{\beta}_i}f_{D_i}^*(\bm{\beta}_i)$\;
    \textbf{Return} $\bm{\beta}_i^*$\; 
\end{algorithm}

After computing the sensitivity's upper bound $\Delta_{\text{upper}}$ as above, we inject Laplace noise with scale $\Delta_{\text{upper}}/\epsilon_i$ to the coefficients of the objective function and then solve the noisy objective function to estimate the moral parameters $\bm{\beta}_i^*$ of each voter $i$. Algorithm~\ref{algo-record-level-func} shows the pseudo-code of our algorithm. From the above analysis, we have a Theorem~\ref{thm1-algo-record-level-func} which follows the above analysis.


\begin{thm}[Privacy of Algorithm~\ref{algo-record-level-func}] \label{thm1-algo-record-level-func}
Algorithm~\ref{algo-record-level-func} satisfies $\epsilon_i$-differential privacy for RLDP.
\end{thm}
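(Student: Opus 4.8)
The plan is to reduce the statement to an application of the functional mechanism of Zhang et al.~\cite{zhang2012functional}, combined with the Laplace mechanism~\cite{dwork2006calibrating} and the post-processing invariance of differential privacy~\cite{dwork2014algorithmic}. The central observation that makes this work is that, for any two record-neighboring datasets $D_i$ and $D_i'$, the approximate objective $\hat{f}_{D_i}(\bm{\beta}_i)$ has exactly the same set of monomials $\{\phi(\bm{\beta}_i): \phi \in \Psi_w,\ 0 \le w \le 2\}$ irrespective of the data; all dependence on $D_i$ is funneled through the aggregate coefficient vector $(\lambda_\phi)_\phi$ with $\lambda_\phi = \sum_{V_j \in D_i} \lambda_{\phi,V_j}$. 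Consequently, privately releasing this coefficient vector is equivalent to privately releasing the entire noisy objective $f^*_{D_i}$.

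First I would invoke the sensitivity bound already derived in Eq.~(\ref{sen-func}): the $\ell_1$-sensitivity of the coefficient vector with respect to record-neighboring datasets is at most $\Delta_{\text{upper}} = 2\sqrt{2d/\pi} + 2d/\pi$, where the factor of $2$ accounts for the change when a single record is swapped. Since Algorithm~\ref{algo-record-level-func} adds independent noise $Lap(\Delta_{\text{upper}}/\epsilon_i)$ to every coefficient $\lambda_\phi$, the Laplace mechanism calibrated to this $\ell_1$-sensitivity guarantees that the released coefficient vector, and hence the noisy objective function $f^*_{D_i}(\bm{\beta}_i)$, satisfies $\epsilon_i$-differential privacy against record-neighboring datasets, which is precisely the RLDP requirement. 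Then I would apply post-processing: the returned $\bm{\beta}_i^* = \argmaxa_{\bm{\beta}_i} f^*_{D_i}(\bm{\beta}_i)$ is a deterministic function of the already-private coefficients and makes no further access to $D_i$, so the end-to-end release of $\bm{\beta}_i^*$ inherits $\epsilon_i$-differential privacy. A pleasant consequence of routing the argument through post-processing is that the guarantee is insensitive to how the maximizer is chosen, so potential non-existence or non-uniqueness of the argmax (the perturbed degree-$2$ polynomial may have an indefinite Hessian) poses no threat to privacy.

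The step I expect to require the most care is not this chaining but the justification that the truncated Taylor approximation introduces no additional leakage. Concretely, I would verify that the decomposition $f = f_1 \circ g_1$, the expansion about $0$, and the degree-$0,1,2$ truncation yielding $\hat{f}_{D_i}$ are fixed, data-independent transformations, so that the only channel through which $D_i$ influences the output is the perturbed coefficient vector; the quality of the approximation affects utility but is irrelevant to the privacy claim. Granting the bound from Eq.~(\ref{sen-func}), the remainder is then a direct composition of the Laplace mechanism with post-processing, and no further calculation is needed.
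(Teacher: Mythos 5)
Your proposal is correct and is essentially the paper's own argument: the paper justifies this theorem by exactly the chain you describe --- the sensitivity bound $\Delta_{\text{upper}}$ on the polynomial coefficient vector (Eq.~(\ref{sen-func}), via Lemma~1 of the functional mechanism of Zhang et al.), Laplace noise of scale $\Delta_{\text{upper}}/\epsilon_i$ on each coefficient, and the fact that maximizing the noisy objective is post-processing of an already-private release. Your write-up is in fact more explicit than the paper's, which merely states that the theorem ``follows the above analysis''; in particular, you spell out the data-independence of the monomial basis and the post-processing step (including robustness to non-existence or non-uniqueness of the argmax) that the paper leaves implicit.
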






The utility of Algorithm~\ref{algo-record-level-func} is bounded by a small quantity, which can be proved by following the utility analysis of the functional mechanism in \cite{zhang2012functional}. We omit the details due to space limitation.

\begin{figure*}[h]
\vspace{-3mm}
  \begin{minipage}{.48\textwidth}
    \centering
  \begin{tabular}{cc} 
     \includegraphics[height=2.8cm]{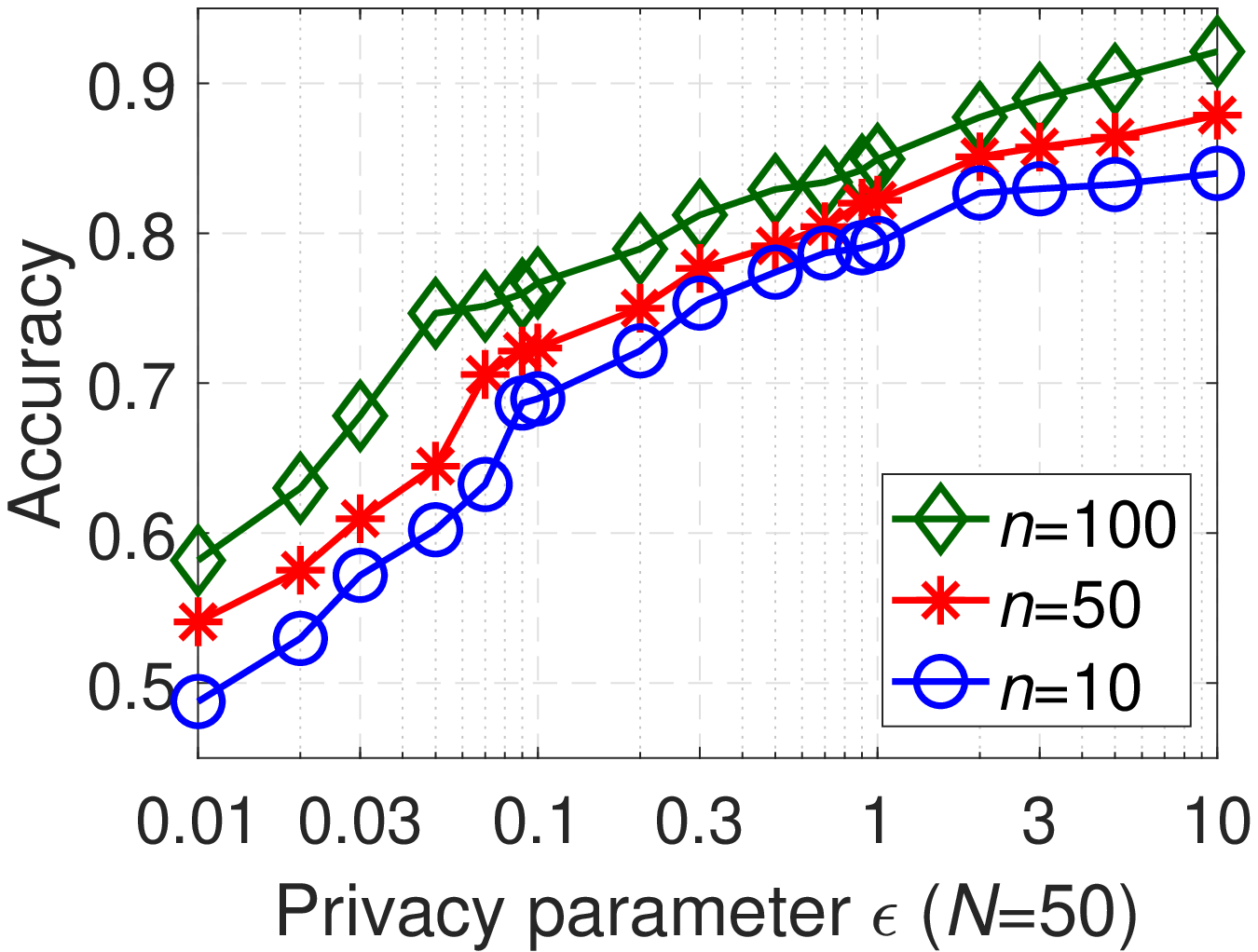} &
    \hspace{-3mm}\includegraphics[height=2.8cm]{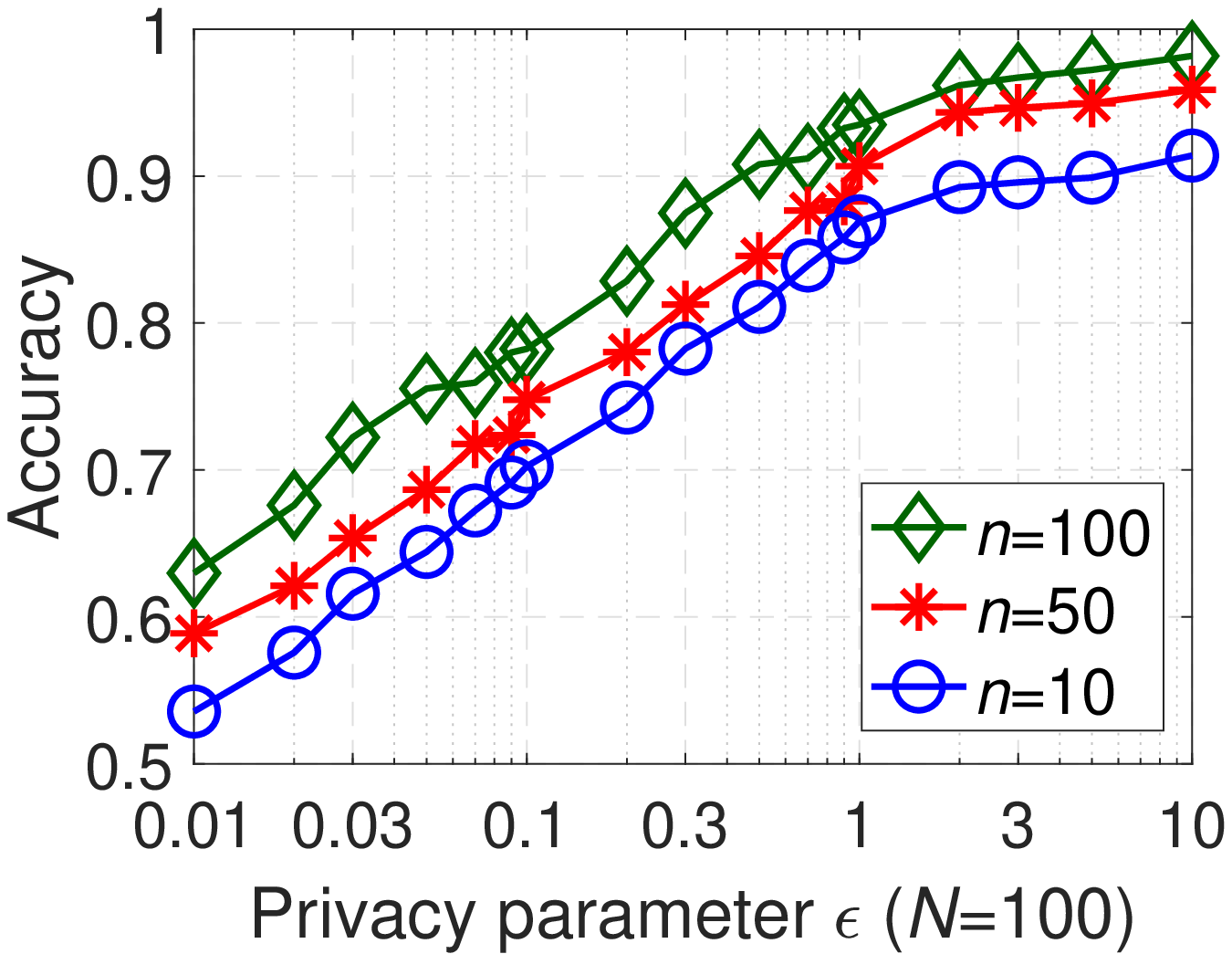}
   \vspace{-1mm} \\
   \footnotesize(a) Number of voters $N=50$ & 
   \hspace{-2mm}\footnotesize(b) Number of voters $N=100$
   \end{tabular} 
 	\vspace{-4mm}
    \caption{\small{Accuracy of Algorithm~\ref{algo-user-level} (VLCP/RLCP under Laplace mechanism) \textit{vs.} privacy parameter $\epsilon$ on synthetic dataset.}}
    \vspace{-3mm} 
	\label{syn-UL}
	\end{minipage}
	~~~~~
\hspace{5mm}
  \begin{minipage}{.48\textwidth}
  \centering
  \begin{tabular}{cc} 
     \includegraphics[height=2.8cm]{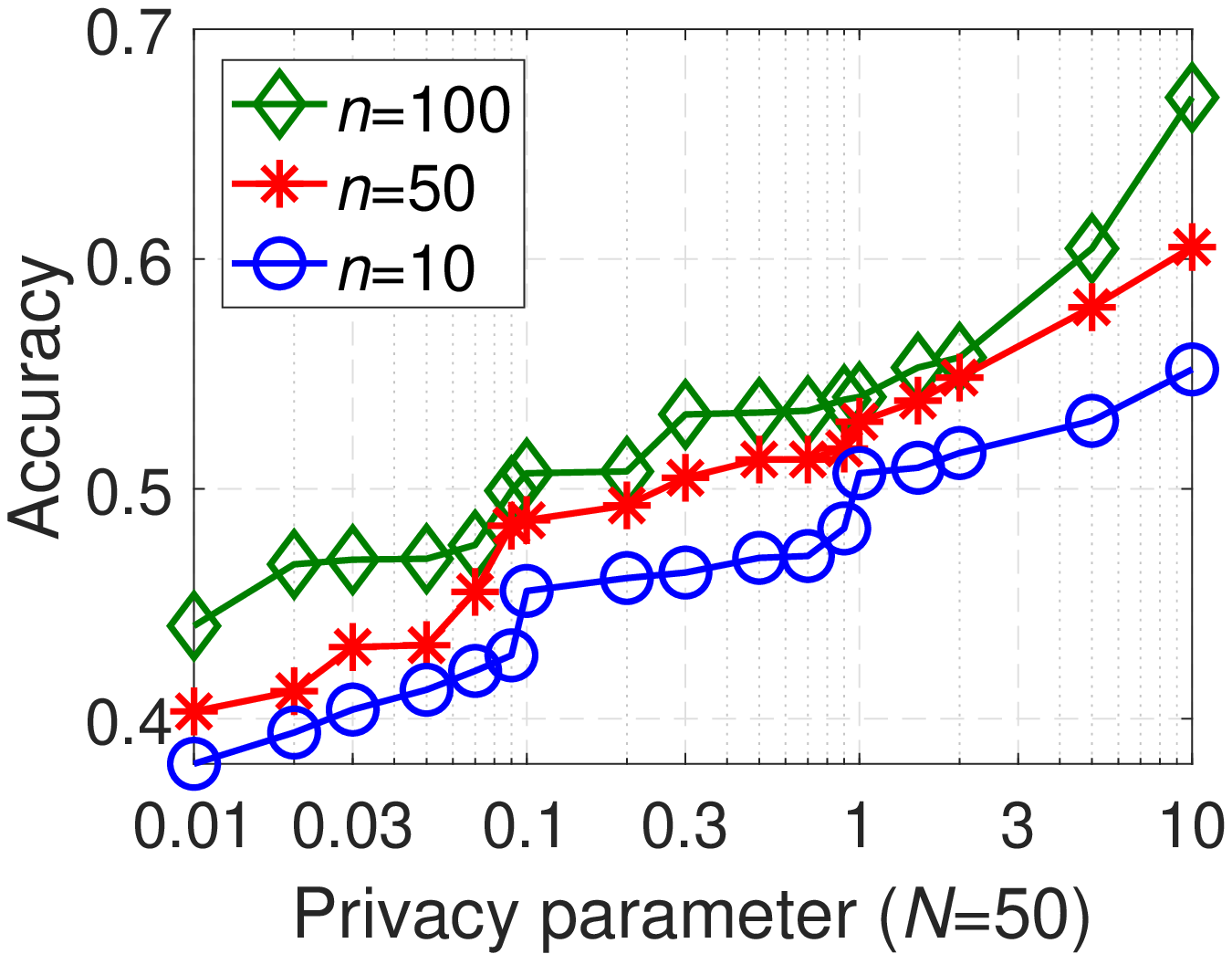} & \hspace{-3mm}\includegraphics[height=2.8cm]{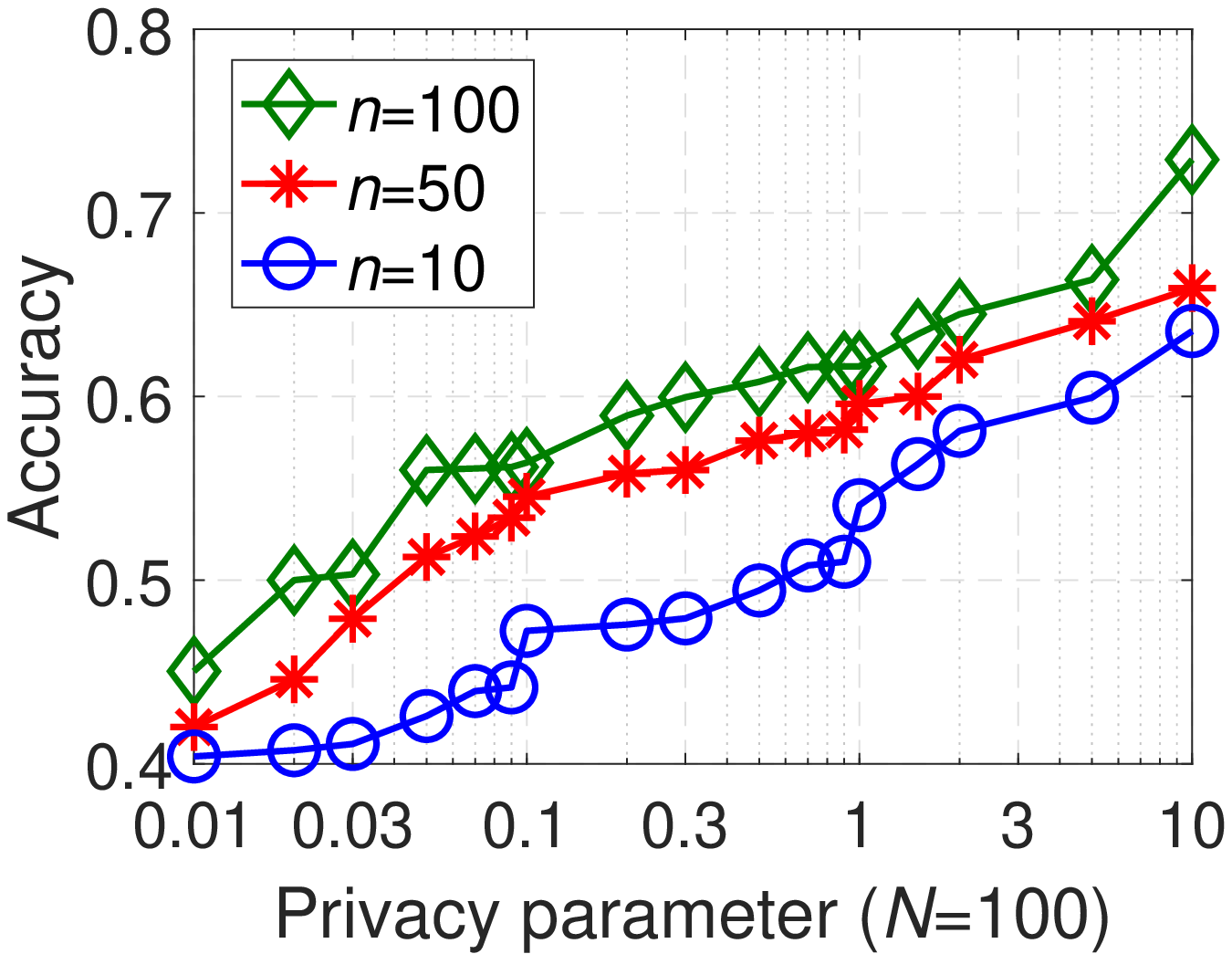} \vspace{-1mm} \\
   \footnotesize(a) Number of voters $N=50$ & \hspace{-2mm}\footnotesize(b) Number of voters $N=100$
   \end{tabular} 
 	\vspace{-4mm}
    \caption{\small{Accuracy of Algorithm~\ref{algo-record-level} (VLDP/RLDP under Laplace mechanism) \textit{vs.} privacy parameter $\epsilon$ on synthetic dataset.}}
    \vspace{-3mm} 
	\label{syn-EL}
	\end{minipage}
\end{figure*}

\begin{figure*}
\begin{minipage}{.48\textwidth}
\centering
  \begin{tabular}{cc} 
     \includegraphics[height=2.8cm]{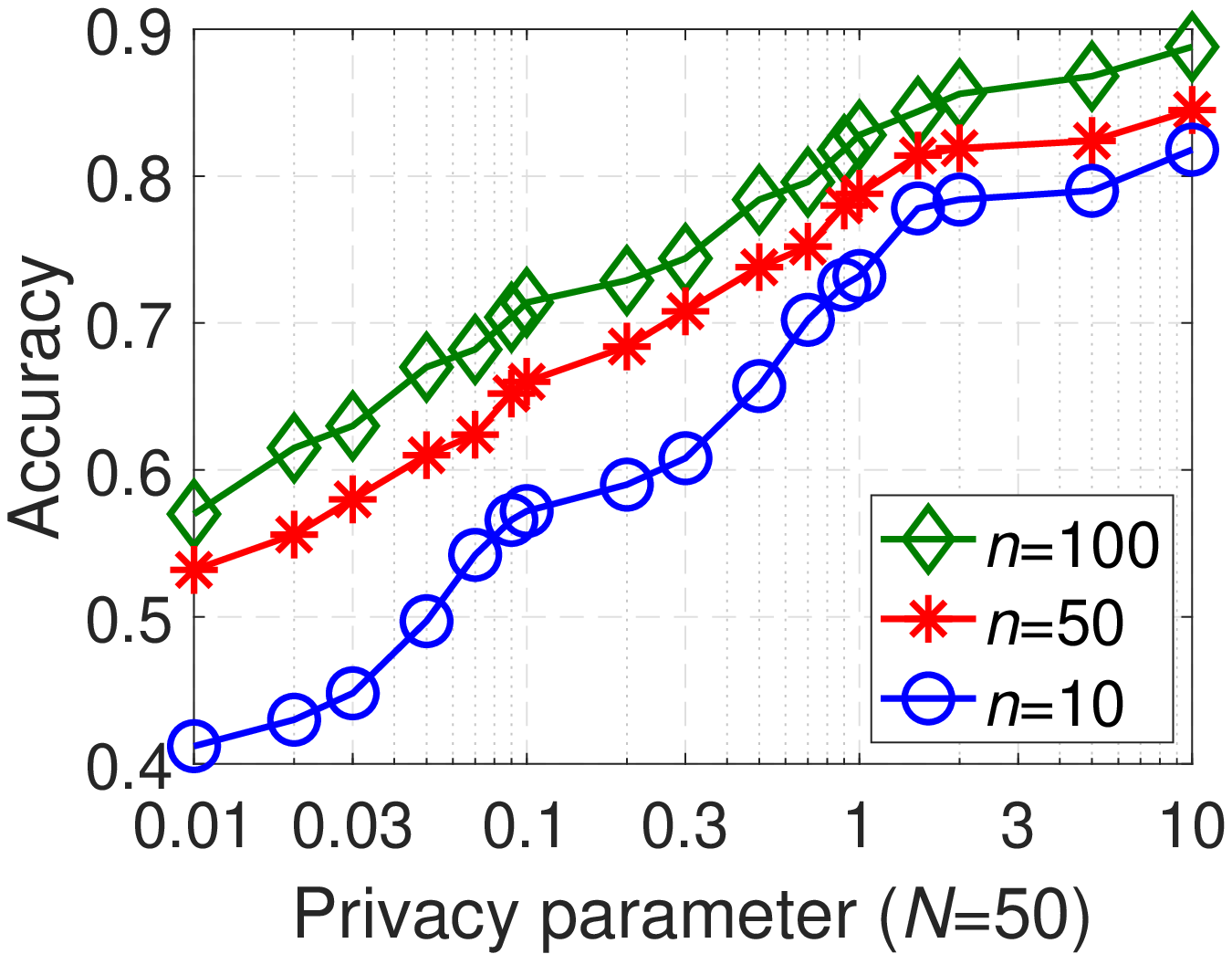} & \hspace{-3mm}\includegraphics[height=2.8cm]{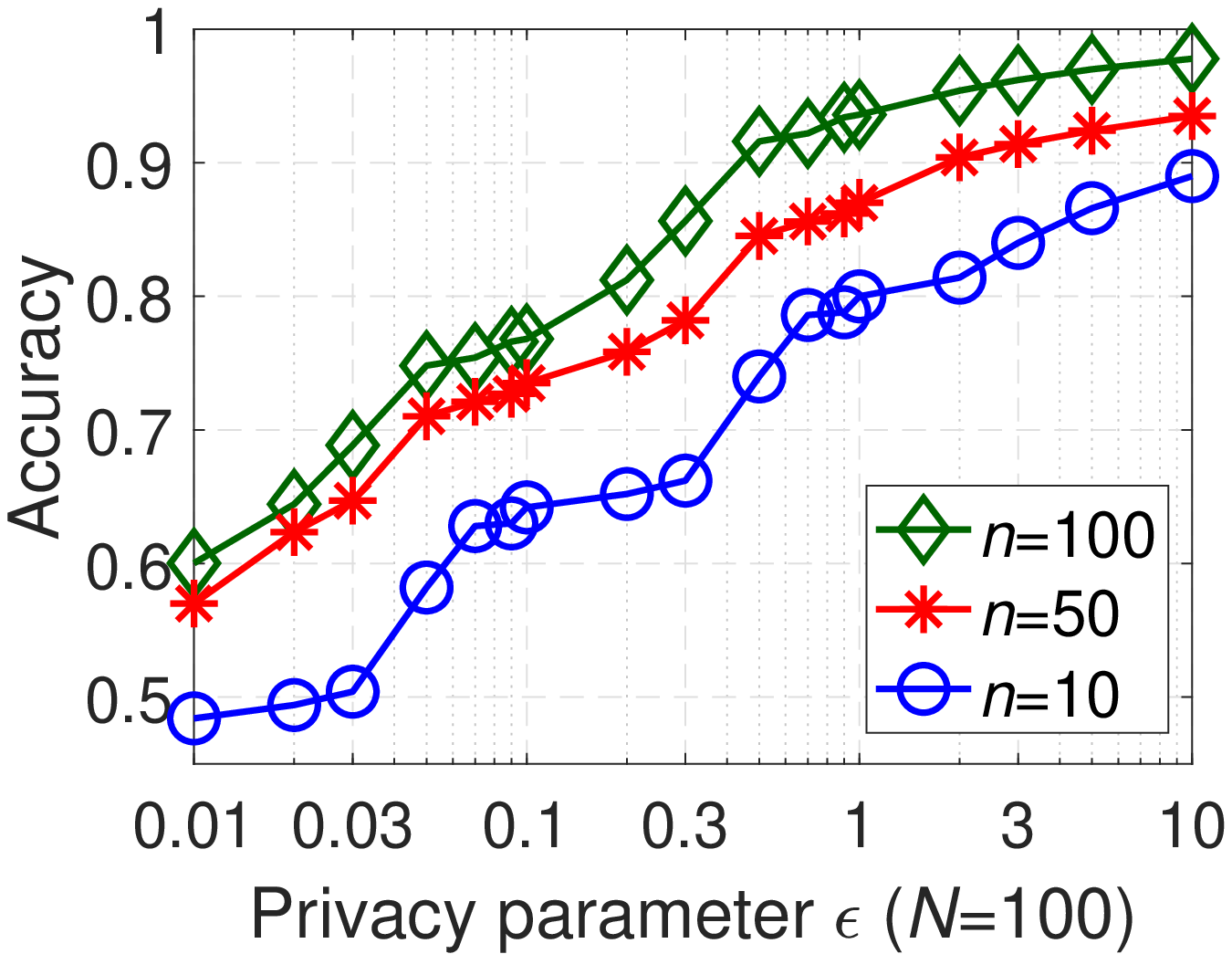} \vspace{-1mm} \\
   \footnotesize(a) Number of voters $N=50$ & \hspace{-2mm}\footnotesize(b) Number of voters $N=100$
   \end{tabular} 
 	\vspace{-4mm}
    \caption{\small{Accuracy of Algorithm~\ref{algo-record-level-func} (RLDP under functional mechanism) \textit{vs.} privacy parameter $\epsilon$ on synthetic dataset.}}
    \vspace{-3mm} 
	\label{syn-Fun}
\end{minipage}
~~~
\hspace{5mm}
\begin{minipage}{.48\textwidth}
\centering
  \begin{tabular}{cc} 
     \includegraphics[height=2.8cm]{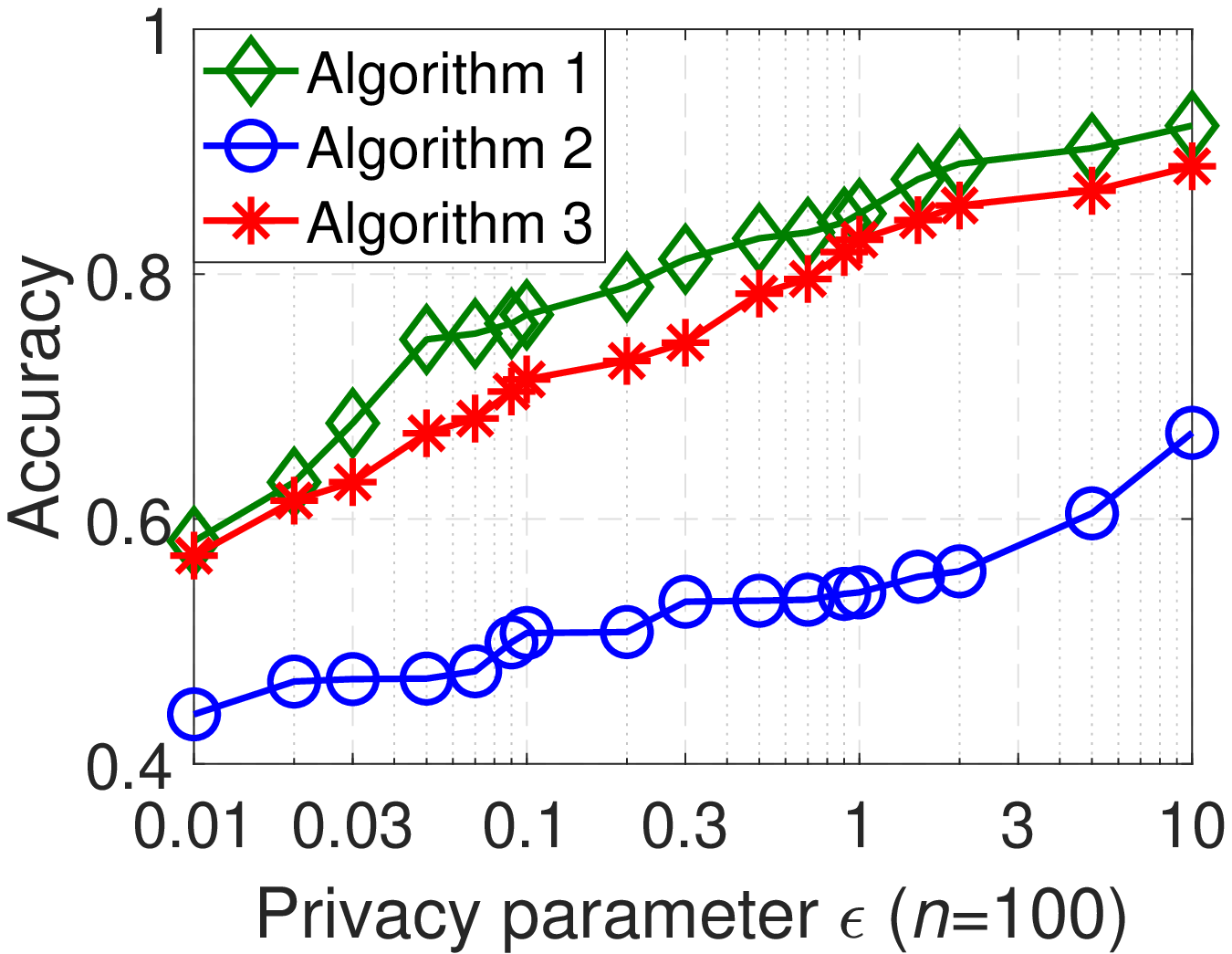} & \hspace{-3mm}\includegraphics[height=2.8cm]{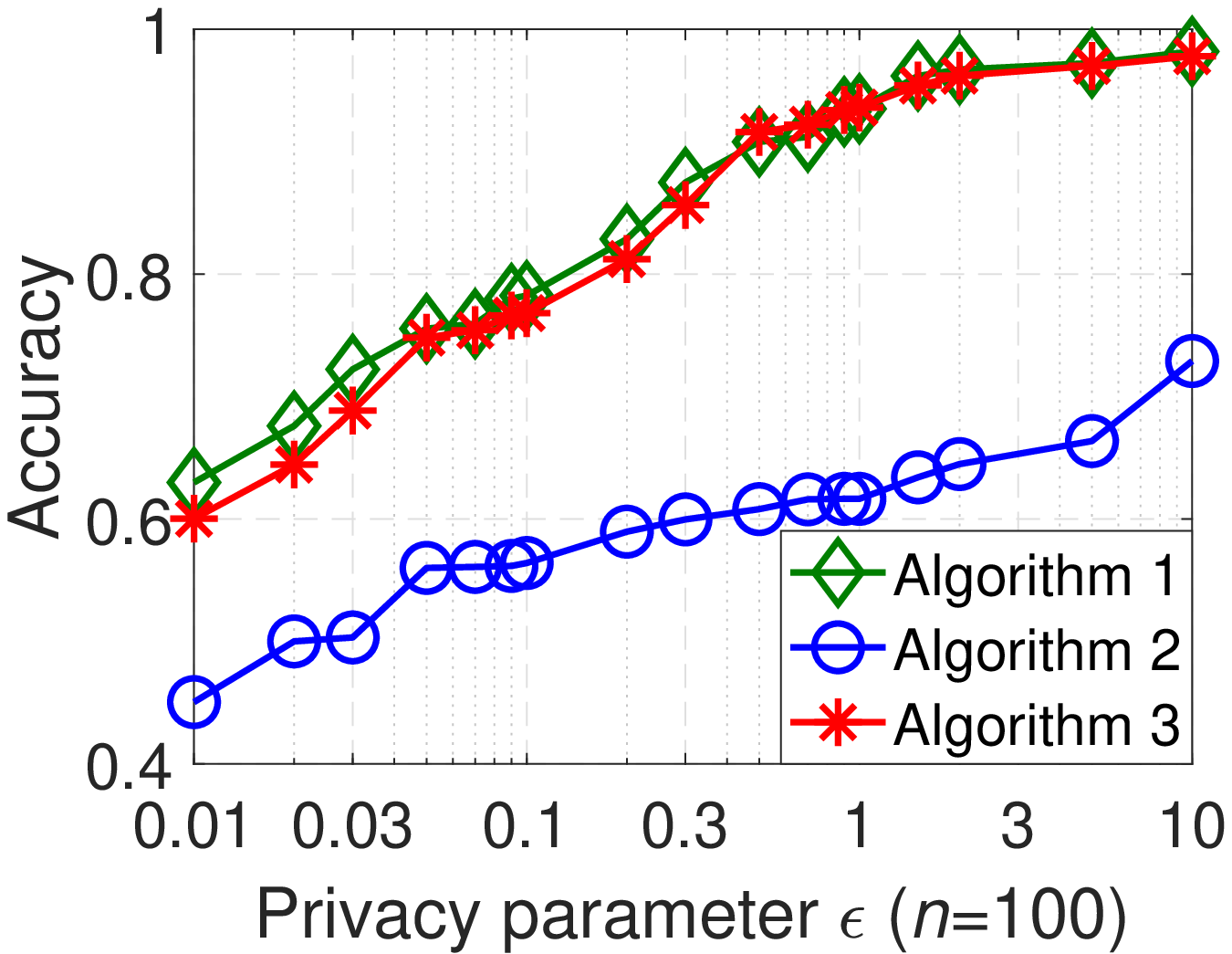} \vspace{-1mm} \\
   \footnotesize(a) Number of voters $N=50$ & \hspace{-2mm}\footnotesize(b) Number of voters $N=100$
   \end{tabular} 
 	\vspace{-4mm}
    \caption{\small{Comparisons of Algorithms~\ref{algo-user-level},~\ref{algo-record-level},~and~\ref{algo-record-level-func} on synthetic dataset.}}
    \vspace{-3mm} 
	\label{syn-compare123}
\end{minipage}
\end{figure*}

\section{Experimental Evaluation}\label{sec:experiments}

This section implements our proposed algorithms and evaluates their performance on synthetic datasets and a real-world dataset extracted from the  Moral Machine, respectively.

\begin{figure}
    \centering
    \begin{tabular}{cc} 
     \includegraphics[height=2.8cm]{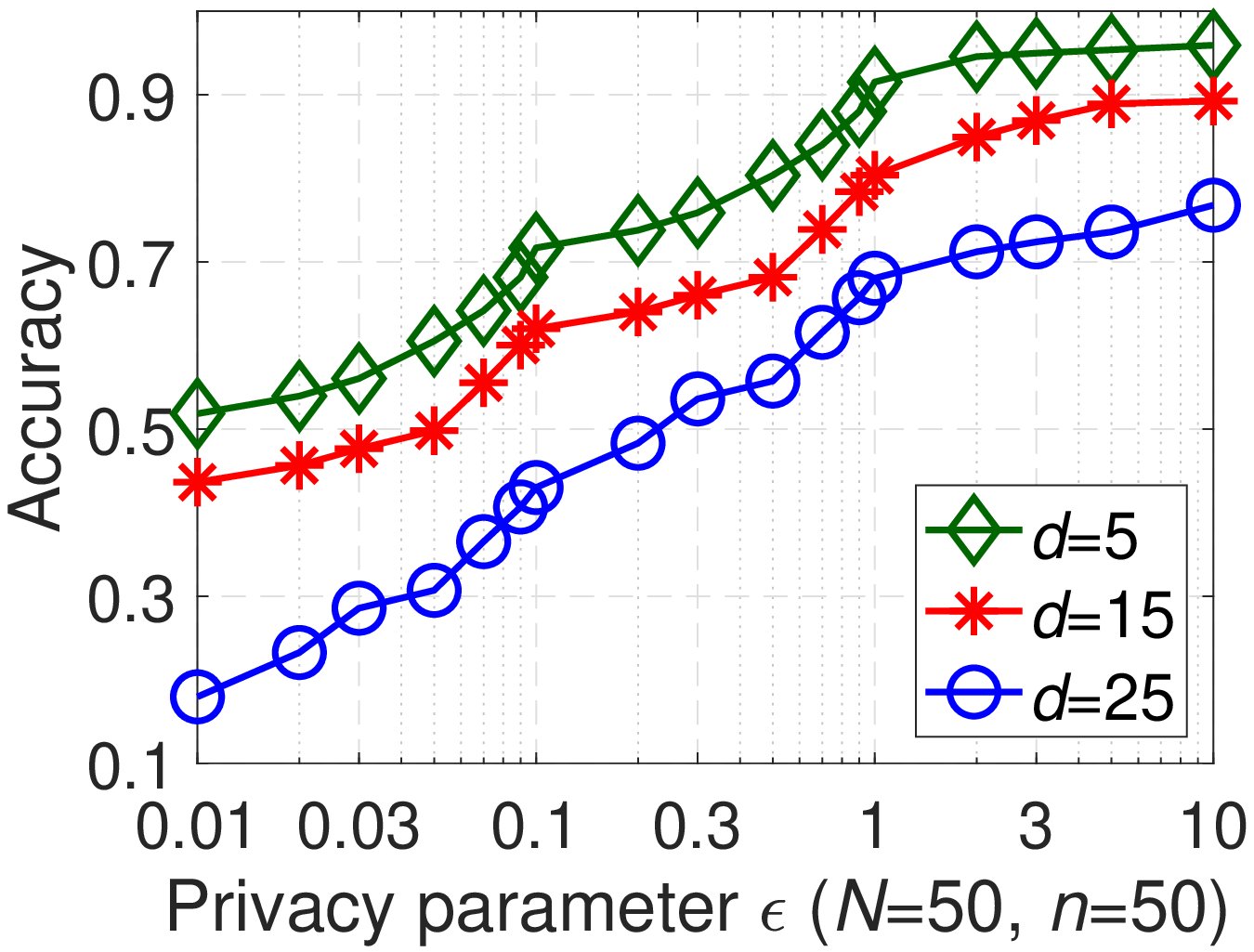} &\hspace{-5mm}
    \includegraphics[height=2.8cm]{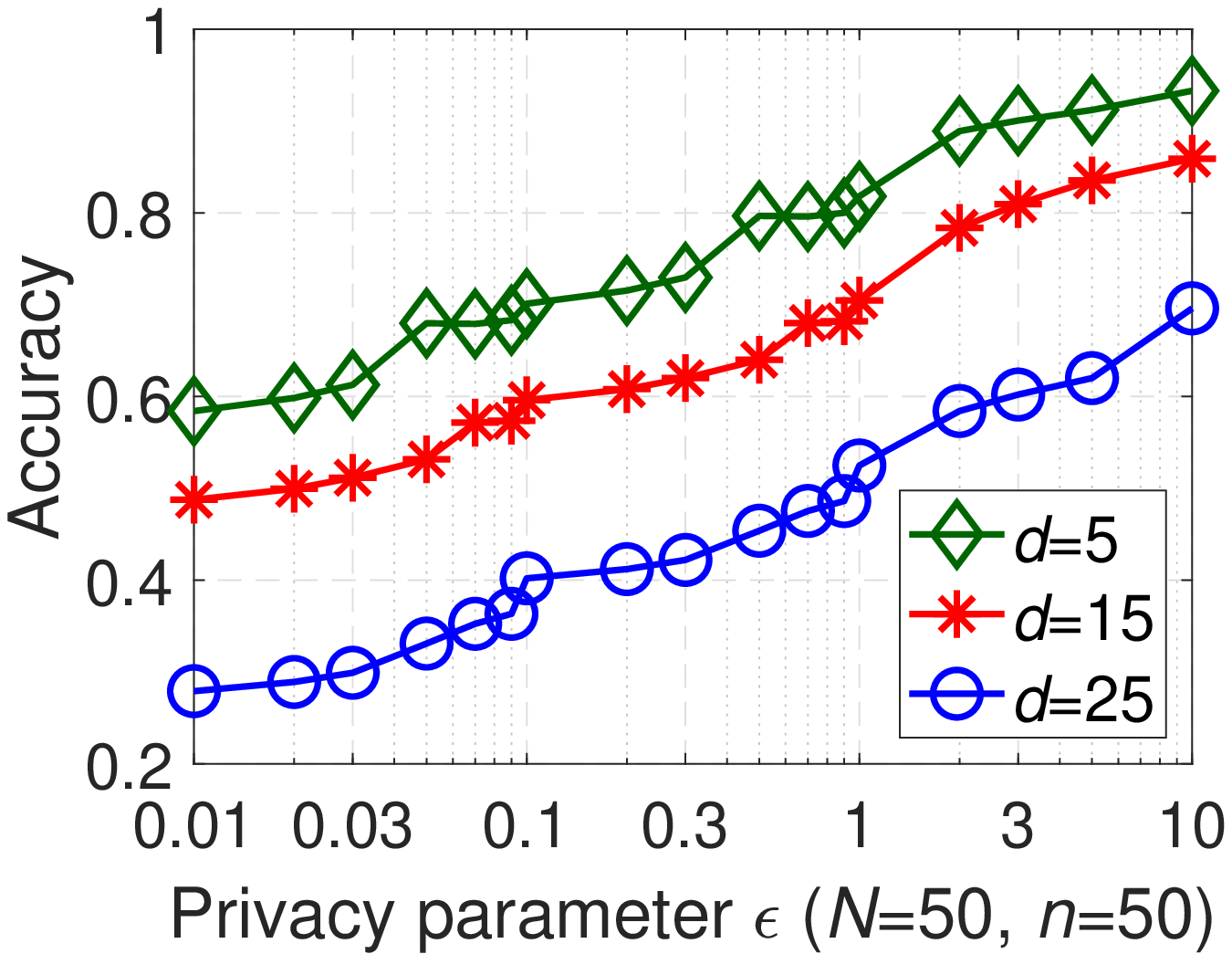} \vspace{-1mm} \\
    \footnotesize(a) Algorithm~\ref{algo-user-level} &  
    \footnotesize(b) Algorithm~\ref{algo-record-level-func}
   \end{tabular} 
 	\vspace{-4mm}
    \caption{\small{Accuracies of Algorithm~\ref{algo-user-level} and Algorithm~\ref{algo-record-level-func} \textit{vs.} dimension $d$ on synthetic dataset.}}
	\label{syn-vary-d}
	\vspace{-3mm} 
\end{figure}

\subsection{Synthetic Data}\label{sec:experiments-syn}

We first present the experimental results on synthetic data. Denote the true parameter of each voter $i$ as $\bm{\beta}_i$ which is sampled from Gaussian distribution $\mathcal{N}(\bm{m},\bm{I}_d)$, where each mean $m_j$ for $j\in \{1,2,\ldots, d\}$ is independently sampled from the uniform distribution $\mathcal{U}(-1,1)$,  and $\bm{I}_d$ is the $d \times d$ identity matrix with $d$ being the dimension size (i.e., the number of features). By default, we set $d=10$ when not specified. For each voter $i \in \{1,2,\ldots,N\}$, we generate its $n$ pairwise comparison via the following two steps (by default $n=50$ when not specified). First, we sample two alternatives $\bm{x}_1$ and $\bm{x}_2$ independently from the Gaussian distribution $\mathcal{N}(0,\bm{I}_d)$. Second, we sample their utilities $U_{\bm{x}_1}$ and $U_{\bm{x}_2}$ based on the Gaussian distribution $\mathcal{N}(\bm{\beta}_i^\top \bm{x}_1,\frac{1}{2})$ and $\mathcal{N}(\bm{\beta}_i^\top \bm{x}_2,\frac{1}{2})$, respectively. To evaluate the performance of our algorithms, we compute the accuracy which is defined as the fraction of test instances on which the true and noisy parameters give the same exact outcome. Besides, the parameter $B$ is set as $B=2$ in the following experiments. 

\begin{figure*}[ht]
\vspace{-3mm}
\hspace{-5mm}
\begin{minipage}{.72\textwidth}
    \centering
    \begin{tabular}{ccc} 
    \includegraphics[height=2.8cm]{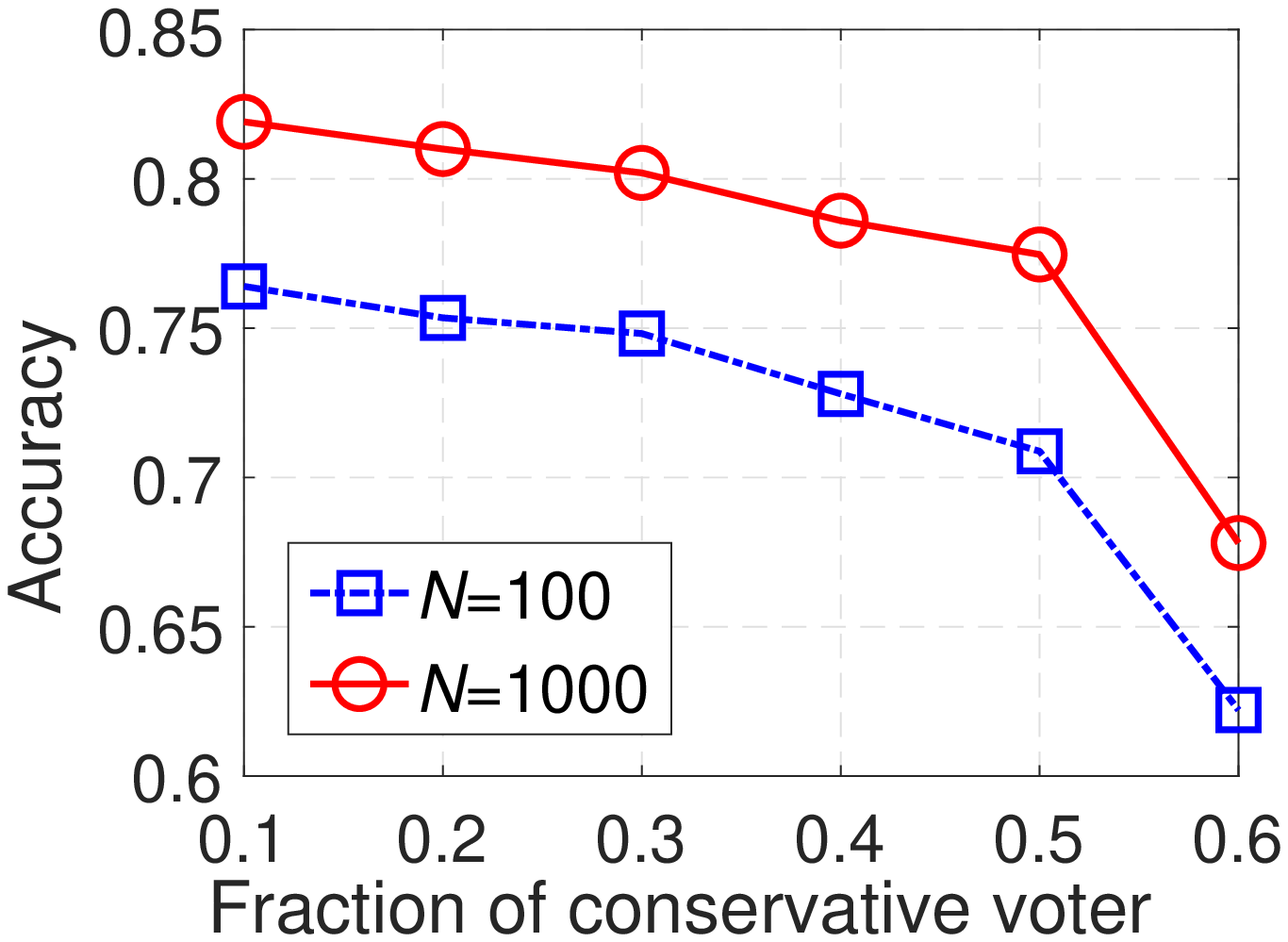}&
    \includegraphics[height=2.8cm]{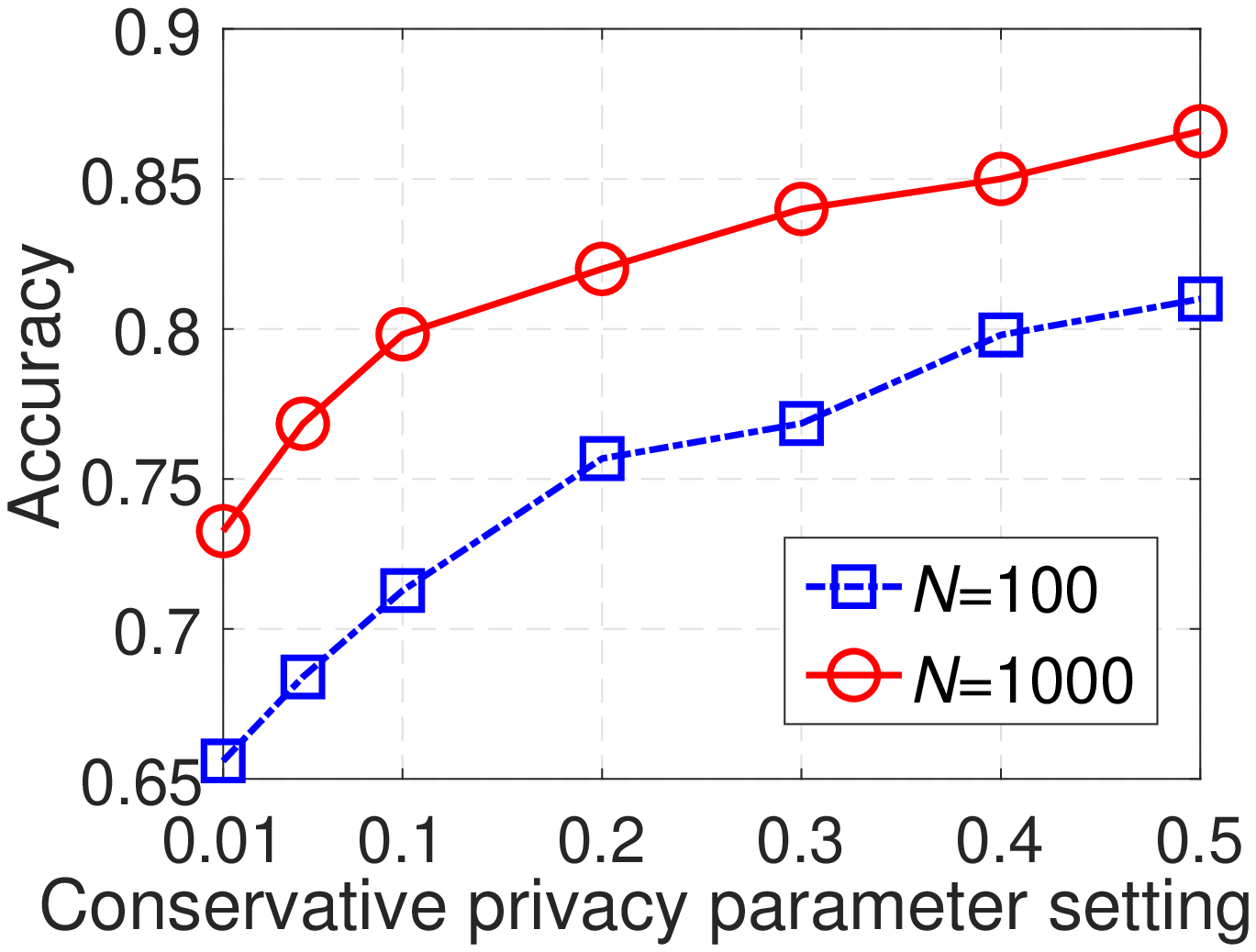} &
    \includegraphics[height=2.8cm]{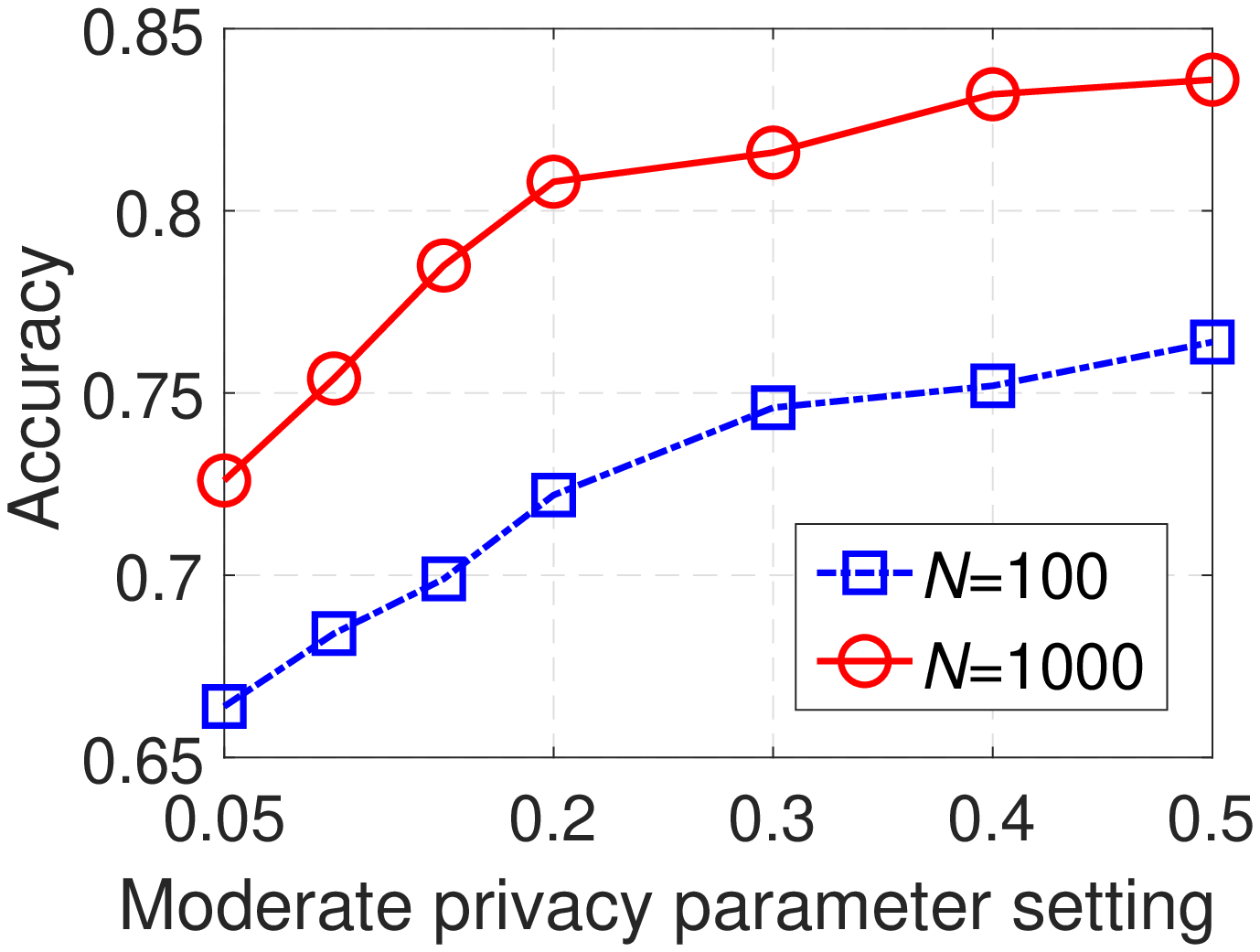} \vspace{-1mm} \\
    \footnotesize(a) Impact of $f_C$ & 
    \footnotesize(b) Impact of $\epsilon_C$ ($\epsilon_M=0.5$)&
    \footnotesize(c) Impact of $\epsilon_M$ ($\epsilon_C=0.01$) 
   \end{tabular} 
 	\vspace{-4mm}
    \caption{\small{Accuracies of Algorithm~\ref{algo-record-level-func} \textit{vs.} personalized privacy parameter settings with $\epsilon_L=1$.}}
    \vspace{-3mm}
	\label{syn-person-ec} 
\end{minipage}
~~~
\begin{minipage}{.24\textwidth}
    \centering
    \begin{tabular}{c} 
    \includegraphics[height=2.8cm]{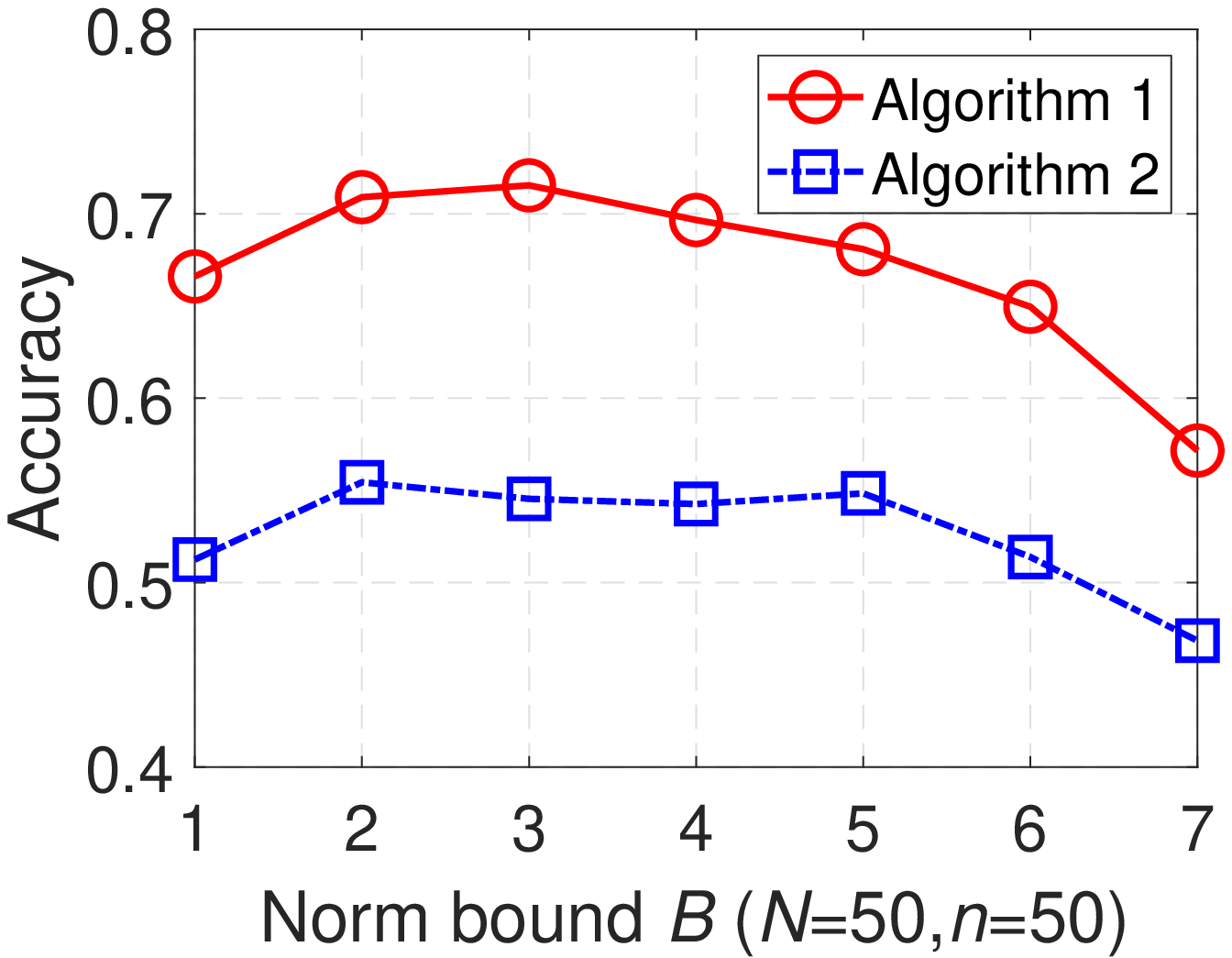}
   \end{tabular} 
 	\vspace{-5mm}
    \caption{\small{Accuracy
    \textit{vs.} norm bound values with $\epsilon=0.1$.}}
    \vspace{-3mm}
	\label{syn-norm-bound} 
\end{minipage}
\end{figure*}

The results in 
Figures~\ref{syn-UL},~\ref{syn-EL} and~\ref{syn-Fun} show high accuracies of our Algorithms~\ref{algo-user-level}, Algorithms~\ref{algo-record-level} and~\ref{algo-record-level-func} on the synthetic dataset with the privacy parameter $\epsilon$ varying from 0.01 to 10; specifically, we consider $\epsilon \in$ $ \{0.01,0.02,0.03,0.05,0.07,0.09,0.1,0.2,0.3,0.5,0.7,0.9,$ $1,2,3,5,10\}$. Note that we here use the same privacy parameter $\epsilon$ for all voters and will analyse the personalized privacy parameter specifically in the following. As shown in all figures, a smaller $\epsilon$ (i.e., higher privacy protection level) will lead to a lower accuracy.

Moreover, Figures~\ref{syn-UL}, \ref{syn-EL} and~\ref{syn-Fun} (all with $d=10$) show that for three Algorithms~\ref{algo-user-level}, \ref{algo-record-level} and \ref{algo-record-level-func}, a larger number of voters (i.e., $N$) or a larger number of records (i.e., $n$) leads to better accuracy. This is because our output is the average of each voter. Then with more voters, each one is relatively less sensitive. Therefore, it can achieve better accuracy (less noise) while keeping the same privacy.



We also compare the accuracies of three Algorithms~\ref{algo-user-level}, \ref{algo-record-level} and \ref{algo-record-level-func}.
As shown in Figure~\ref{syn-compare123}, Algorithms~\ref{algo-user-level} outperforms the other two algorithms. Also, both Algorithms~\ref{algo-user-level} and \ref{algo-record-level-func} have much better accuracy than Algorithm~\ref{algo-record-level}. This is because Algorithm~\ref{algo-record-level} realizes distributed privacy protection by adding Laplace noise, which ensures strong privacy guarantees, thus leading a relatively low data utility. In contrast, Algorithm~\ref{algo-record-level-func} also achieves distributed privacy protection by perturbing the objective function instead of parameters directly, which improves the data utility greatly. Therefore, Figure~\ref{syn-compare123} illustrates the superiority of functional mechanism over Laplace for record-level privacy protection with distributed perturbation. 


\textbf{Impact of the dimension $d$.}
Figure~\ref{syn-vary-d} shows the comparisons of the impact of parameter $d$ on the accuracy of the Algorithms~\ref{algo-user-level} and~\ref{algo-record-level-func}. Due to the space limitation, we no longer present the result of Algorithm~\ref{algo-record-level} here since it has limited data utility as shown in Figure~\ref{syn-compare123}. It can be seen from Figure~\ref{syn-vary-d} that the accuracy of both Algorithms~\ref{algo-user-level} and~\ref{algo-record-level-func} will be decreased with the increase of dimension $d$. For Algorithm~\ref{algo-user-level}, the larger $d$ means more information to be protected so the accuracy will be reduced. For Algorithm~\ref{algo-record-level-func}, the sensitivity will increase with the increase of $d$, thus leading to the reduction of the accuracy when $d$ becomes larger.

\textbf{Impact of the personalized privacy parameters.} We also implement personalized privacy parameter choices for RLDP (i.e., Algorithm~\ref{algo-record-level-func}) and conduct extensive experiments to show the impact of the personalized privacy parameter on accuracies. Although Algorithm~\ref{algo-record-level} can also achieve personalized privacy protection, we no longer present the result of Algorithm~\ref{algo-record-level} here since it has limited data utility and due to the space limitation.

We set personalized privacy parameter specifications based on the findings from studies in \cite{jorgensen2015conservative,acquisti2005privacy}. 
The voters are randomly divided into three groups: (\textit{i}) \textit{conservative}, (\textit{ii}) \textit{moderate}, and (\textit{iii}) \textit{liberal}, representing voters with high/medium/low privacy concern, respectively.
The fraction of voters in three groups are
$f_C$, $f_M$, and $f_L$, where $f_C+f_M+f_L=1$. We set default values as $f_C = 0.54$, $f_M=0.36$, and $f_L=0.1$, which are chosen based on the findings reported in \cite{acquisti2005privacy}.
The privacy parameters $\epsilon$ of voters who belong to conservative or moderate groups are chosen uniformly at random from $[\epsilon _C, \epsilon _M]$ or $[\epsilon _M, \epsilon _L]$ correspondingly (and rounding to the nearest hundredth), where $\epsilon _C\in\{ \underline{0.01},0.05,0.1,0.2,\cdots,0.5\}$, $\epsilon _M\in\{0.05,0.1,0.15,\underline{0.2},\cdots,0.5\}$, and $\epsilon _L= \underline{1}$, and the default values are underlined.

\begin{figure*}[h]
\vspace{-1mm}
\centering
    \begin{tabular}{cccc} 
    \hspace{-4mm}\includegraphics[height=2.8cm]{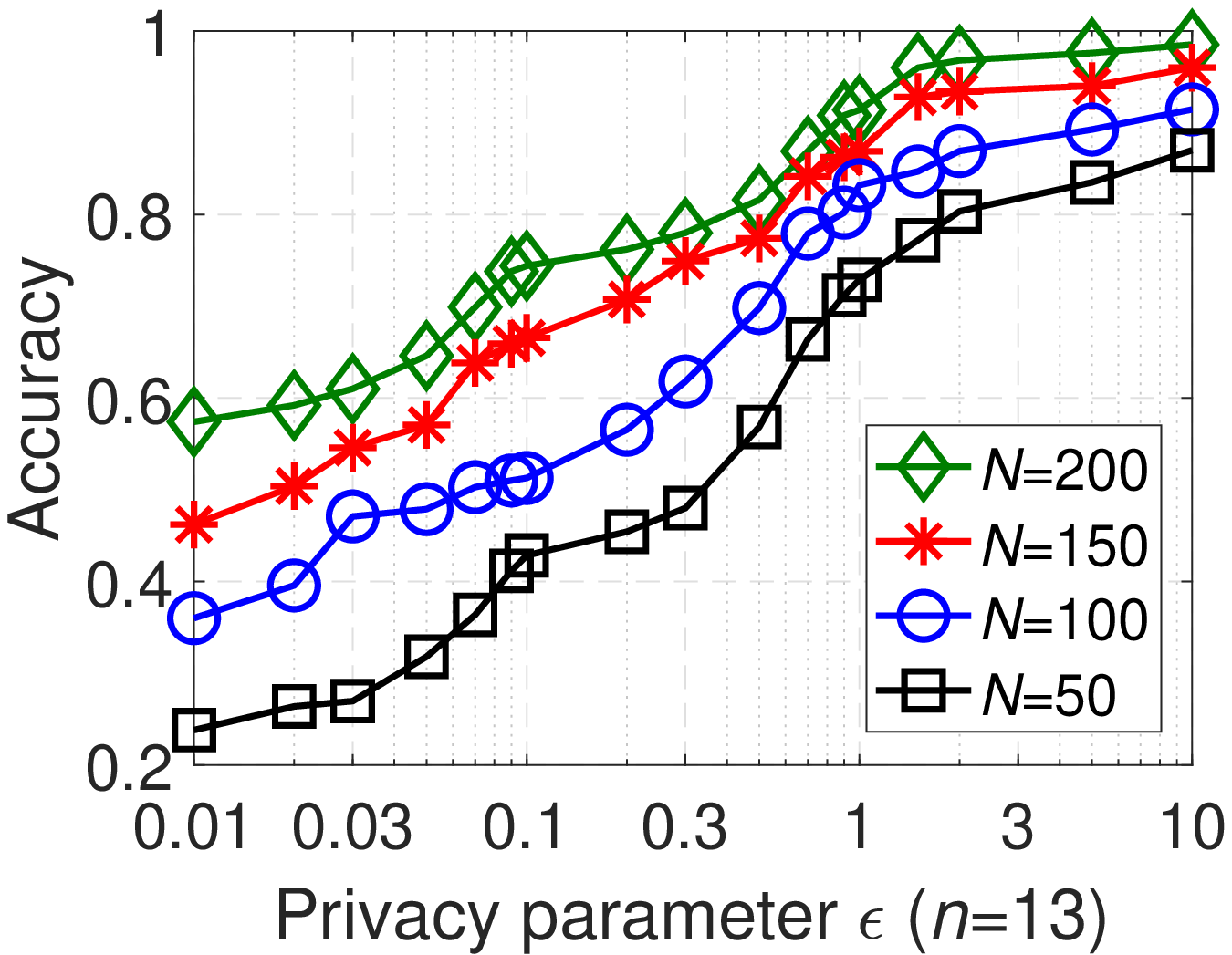} &
    \hspace{-2mm}\includegraphics[height=2.8cm]{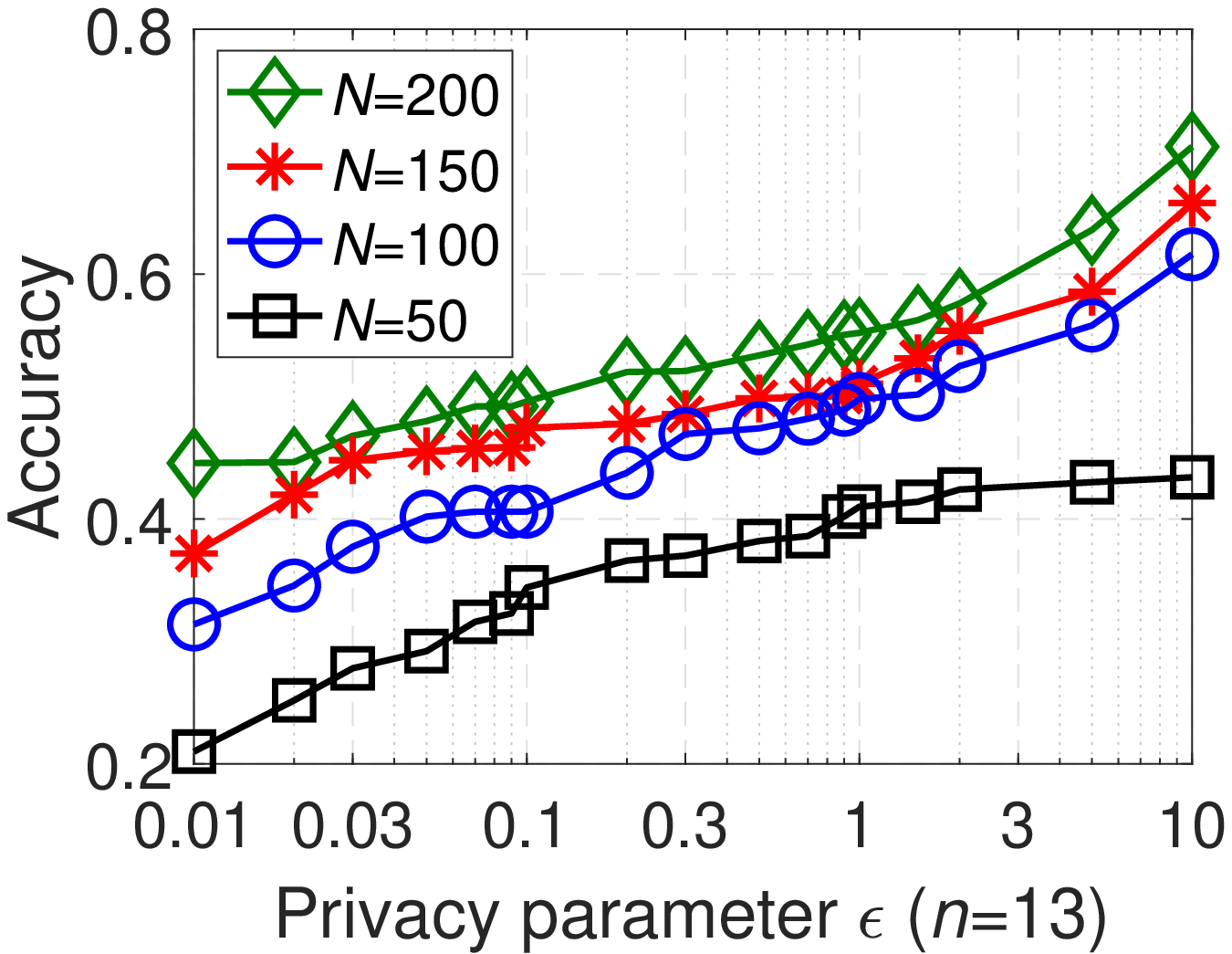} &
    \hspace{1mm}\includegraphics[height=2.8cm]{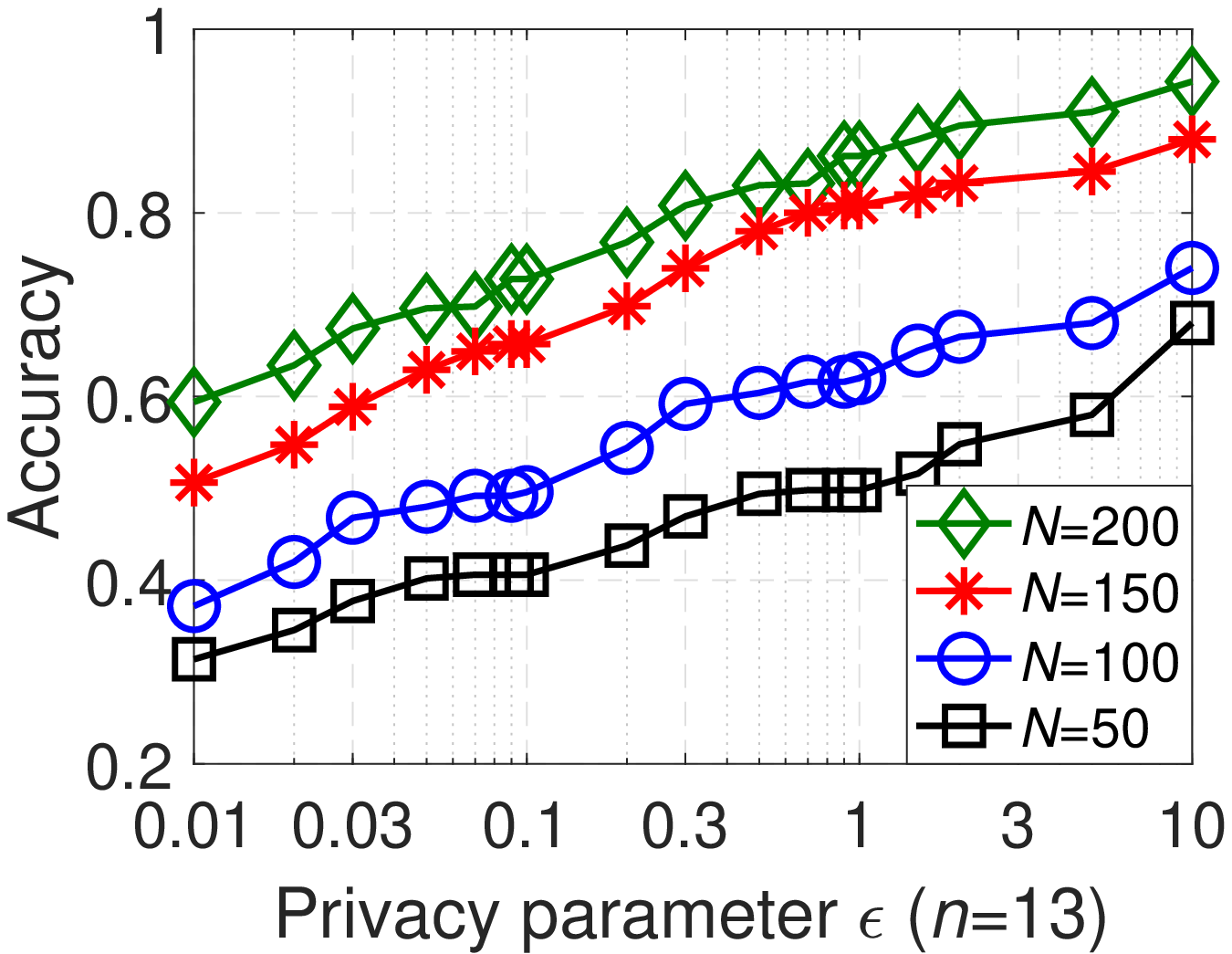} &
    \hspace{1mm}\includegraphics[height=2.8cm]{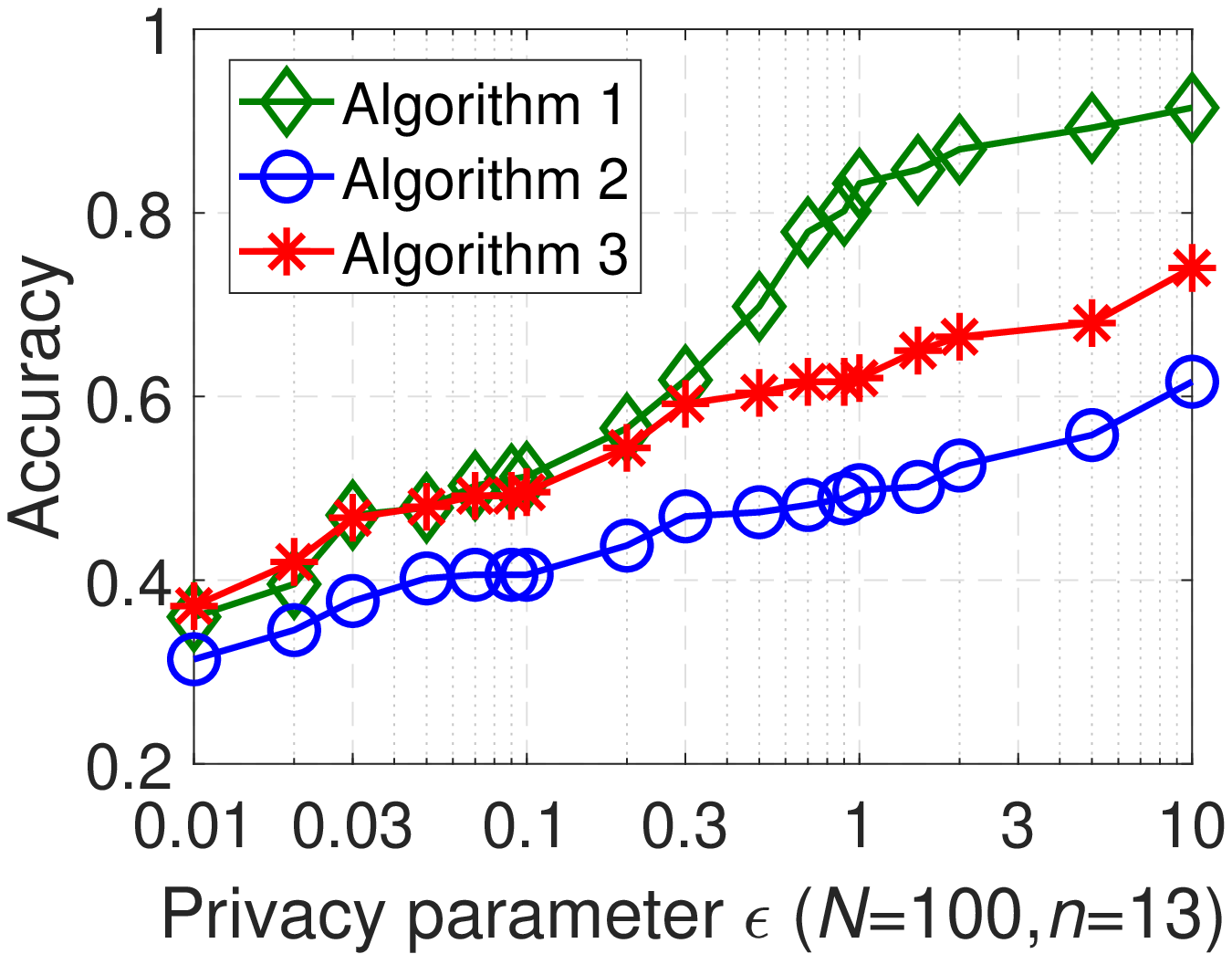} \vspace{-1mm} \\
    \footnotesize(a) Algorithm~\ref{algo-user-level}&
    \footnotesize(b) Algorithm~\ref{algo-record-level} &
    \footnotesize(c) Algorithm~\ref{algo-record-level-func} &
    \hspace{-5mm}\footnotesize(d) Comparisons of Algorithms~\ref{algo-user-level},~\ref{algo-record-level}~and~\ref{algo-record-level-func}
   \end{tabular} 
 	\vspace{-4mm}
    \caption{Accuracies of Algorithms~\ref{algo-user-level},~\ref{algo-record-level}~and~\ref{algo-record-level-func} versus privacy parameter $\epsilon$ on real dataset from the Moral Machine.}
    \vspace{-3mm}
	\label{real}
\end{figure*}

Then, we implement Algorithm~\ref{algo-record-level-func} by setting personalized privacy parameters as specified above.
Figure~\ref{syn-person-ec}(a) shows the accuracy of Algorithm~\ref{algo-record-level-func} by varying the fraction of conservative voters $f_C$ from 0.1 to 0.6 by setting $n=50$ and $d=10$. Note that for each set of $f_C$, the fraction of moderate voter is fixed as $f_M = 0.36$ and the fraction of liberal voters is equal to $1-(f_C+f_M)$. We can observe from Figure~\ref{syn-person-ec}(a) the accuracy will be reduced when increasing $f_C$. This is because increasing $f_C$ will increase the number of conservative voters (prefer for choosing small $\epsilon$) but decrease the number of liberal voters (prefer for choosing large $\epsilon$), thus leading to more noise injections and accuracy reductions.

Furthermore, Figures~\ref{syn-person-ec}(b) and \ref{syn-person-ec}(c) present experimental results on the impact on accuracy when varying the range of $\epsilon_C$ or $\epsilon_M$, respectively. Figure~\ref{syn-person-ec}(b) shows the accuracy of Algorithm~\ref{algo-record-level-func} when varying $\epsilon_C$ from 0.01 to 0.5 by fixing moderate privacy parameter as $\epsilon_M=0.5$ (rather than the default value of $\epsilon_M=0.2$ to ensure $\epsilon_C \leq \epsilon_M$ in all cases). It can be seen that the larger $\epsilon_C$ or $N$ will induce better accuracy. Besides, Figure~\ref{syn-person-ec}(c) shows the accuracy of Algorithm~\ref{algo-record-level-func} when changing $\epsilon_M$ from 0.05 to 0.5 by fixing
$\epsilon_C=0.01$. We can observe again that larger $\epsilon_M$ or $N$ will lead to better accuracy. What's more, the accuracy of Figure~\ref{syn-person-ec}(b) is higher than that of Figure~\ref{syn-person-ec}(c). This is because the conservative privacy parameter in Figure~\ref{syn-person-ec}(c) always equals to 0.01 and the fraction of conservative voters is bigger than that of moderate voters.

\textbf{Impact of the norm bound $B$.} In addition, we also analyze the impact of norm bound $B$ on the accuracies of Algorithms~\ref{algo-user-level} and \ref{algo-record-level}, as shown in Figure~\ref{syn-norm-bound}. It can be observed that the accuracy will be relatively higher when norm bound around at $2$ or $3$.

\subsection{Moral Machine Data}\label{sec:experiments-real}

The Moral Machine data contain the moral decisions by voters about ethical dilemmas faced by autonomous vehicles. Each voter is asked to view 13 scenarios (i.e. $n=13$), each of which includes two possible moral dilemma alternatives. Each alternative is characterized by 23 features (i.e. $d=23$). We sample $200$ voters' data ($N=200$) from the Moral Machine dataset to evaluate our algorithms.

Figures~\ref{real}(a), \ref{real}(b) and \ref{real}(c) plot the accuracies of our Algorithms~\ref{algo-user-level},\ref{algo-record-level} and~\ref{algo-record-level-func} on the real dataset for privacy parameters $\epsilon$ from 0.01 to 10 and for different numbers of voters (i.e., $N$), respectively. The privacy parameter specifications follow the same settings of the synthetic dataset in Section~\ref{sec:experiments-syn}. As expected, larger $N$ or  larger $\epsilon$ will induce  better accuracy for all three Algorithms~\ref{algo-user-level},\ref{algo-record-level} and~\ref{algo-record-level-func}.

In addition, Figure~\ref{real}(d) shows the comparisons of three Algorithms~\ref{algo-user-level},\ref{algo-record-level} and~\ref{algo-record-level-func} with $N=100,n=13$. It can observe from Figure~\ref{real}(d) that Algorithm~\ref{algo-user-level} relatively outperforms Algorithms~\ref{algo-record-level}~and~\ref{algo-record-level-func} and the Algorithm~\ref{algo-record-level} has the lowest accuracy in all cases. This is because Algorithm~\ref{algo-record-level} follows a strong privacy definition with distributed Laplace perturbation, which leads to larger noise addition. In contrast, Algorithm~\ref{algo-record-level-func} also achieves distributed privacy protect while ensuring  a good data utility since it perturbs objective functions instead of parameters. Thus, this demonstrates the superiority of functional mechanism over Laplace mechanism for record-level privacy protection with distributed perturbation. 



Furthermore, extensive experiments for personalized parameter settings are also conducted on the Moral Machine data, which follows the same privacy parameter specifications as Section~\ref{sec:experiments-syn}. Figure~\ref{real-person} shows the accuracies of the Algorithm~\ref{algo-record-level-func} under different personalized privacy parameters. Two figures in  Figure~\ref{real-person} present the impact of the conservative privacy parameters $\epsilon_C$ (resp., moderate privacy parameters $\epsilon_M$) on accuracies by varying $\epsilon_C$ from 0.01 to 0.5 with $\epsilon_M=0.5$ (reps., $\epsilon_M$ from 0.05 to 0.5 with $\epsilon_C=0.01$). Both figures show that larger $\epsilon_C$ or $\epsilon_M$ will lead to higher accuracy.

\begin{figure*}[h]
\centering
\hspace{-2mm}
\begin{minipage}{.48\textwidth}    
    \begin{tabular}{cc} 
     \includegraphics[height=2.8cm]{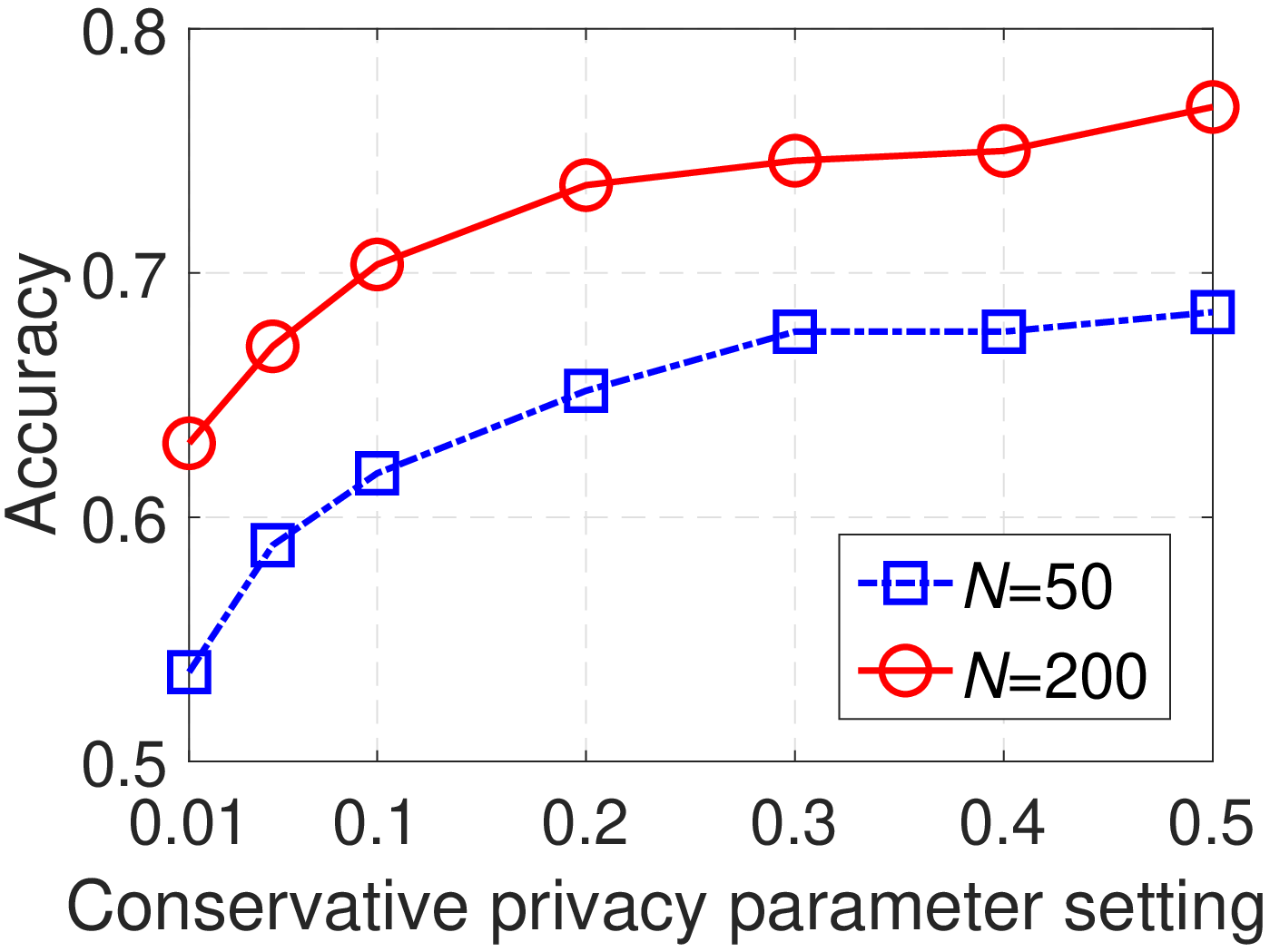} &\hspace{-2mm}
    \includegraphics[height=2.8cm]{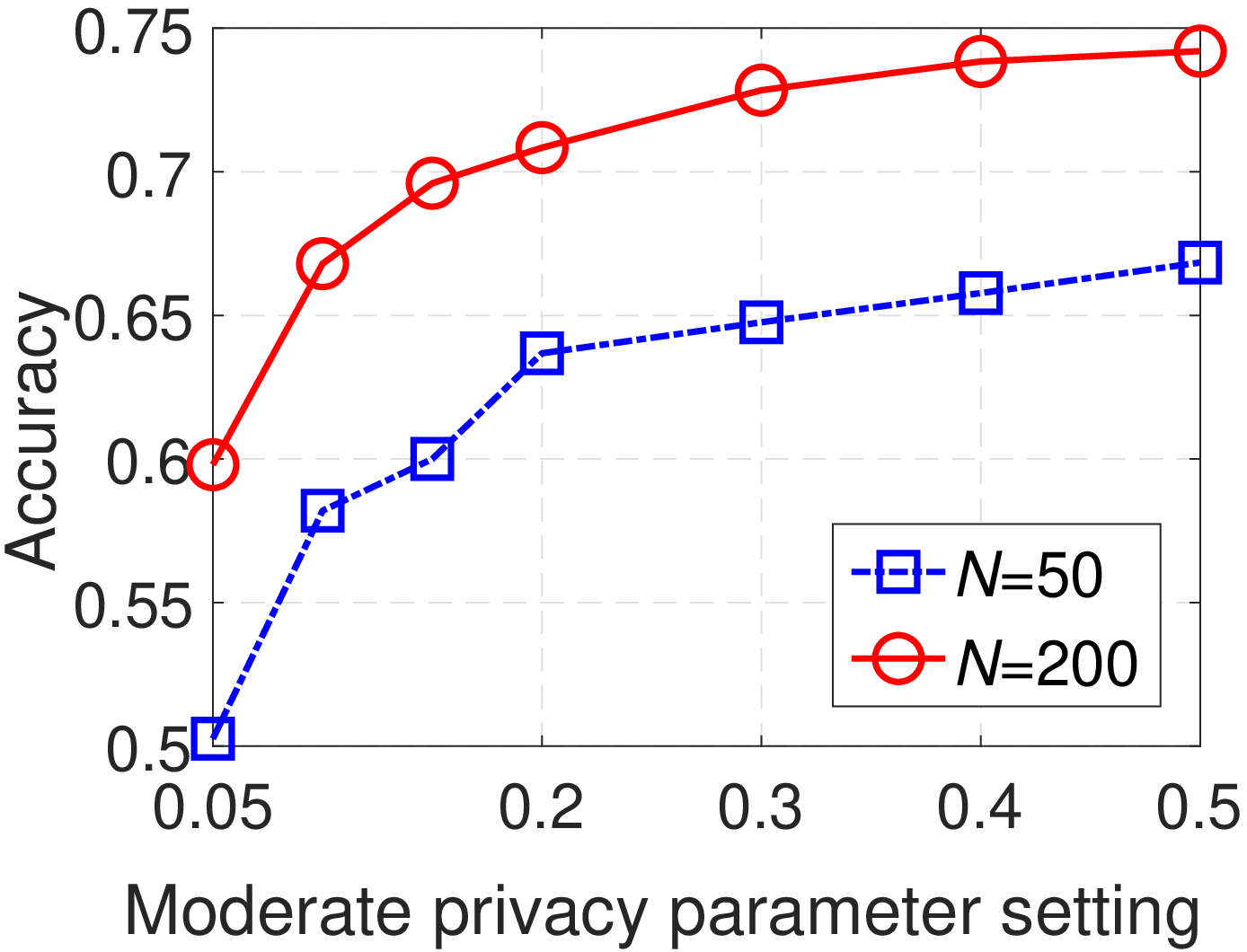}  \vspace{-1mm}\\
    \footnotesize(a) Impact of $\epsilon_C$ ($\epsilon_M=0.5$) & 
    \footnotesize(b) Impact of $\epsilon_M$ ($\epsilon_C=0.01$)
   \end{tabular} 
 	\vspace{-4mm}
        \caption{\small{Accuracies of Algorithm~\ref{algo-record-level-func} \textit{vs.} personalized privacy parameters with $\epsilon_L=1$ on real dataset from the Moral Machine.}}
    \vspace{-3mm}
	\label{real-person}
\end{minipage}
~~~
\hspace{3mm}
\centering
\begin{minipage}{.48\textwidth}
    \begin{tabular}{cc} 
     \includegraphics[height=2.8cm]{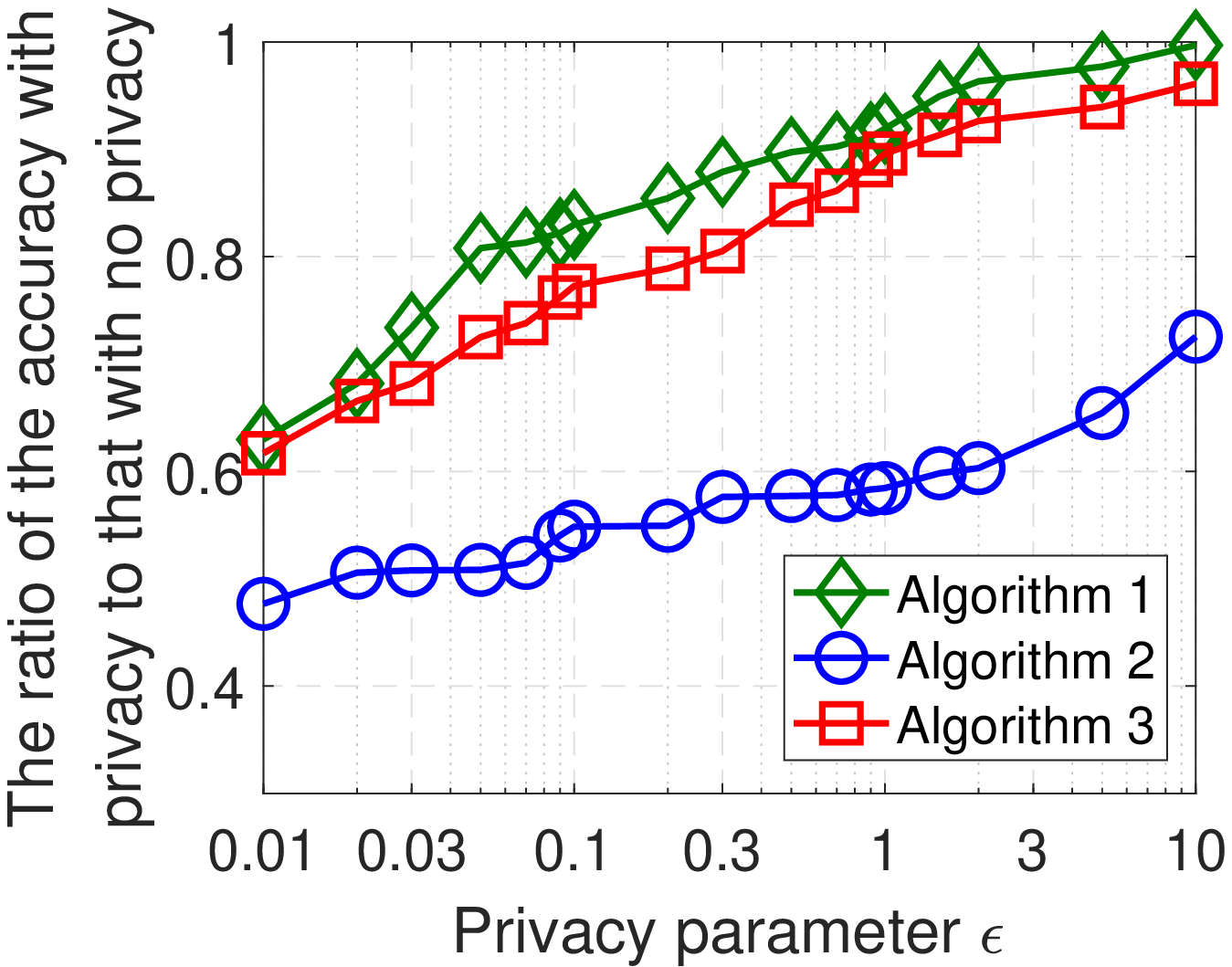} &\hspace{-3mm}
    \includegraphics[height=2.8cm]{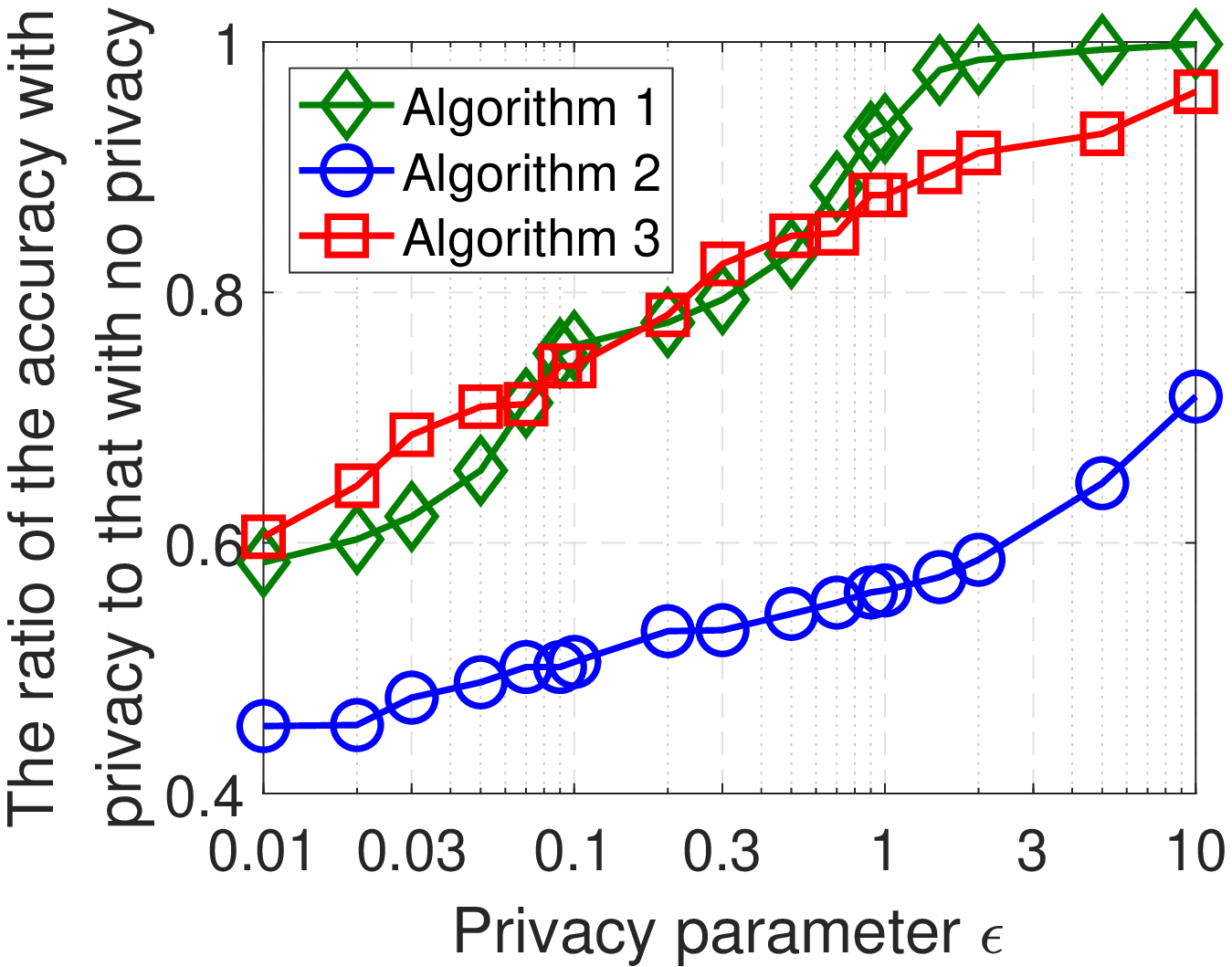} \vspace{-1mm} \\
    \footnotesize(a) Synthetic dataset &   \footnotesize(b) Real dataset 
   \end{tabular} 
 	\vspace{-4mm}
        \caption{\small{Comparisons on the accuracies between our privacy-preserving algorithms with that with no privacy.}}
    \vspace{-3mm}
	\label{compare-no-privacy}
\end{minipage}
\end{figure*}

\subsection{Comparisons with No Privacy}
To further show the effectiveness of our privacy-preserving algorithms, we present the ratio of the accuracy of our algorithms to the baseline method that with no privacy protection (i.e., solution of Noothigattu \textit{et al.}~\cite{noothigattu2018voting}). As shown in Figure~\ref{compare-no-privacy}, Algorithms~\ref{algo-user-level}~and~\ref{algo-record-level-func} hold better accuracy than Algorithm~\ref{algo-record-level} on both synthetic dataset (with $N=50$ and $n=100$, Noothigattu \textit{et al.}'s method has $92.4\%$ accuracy) and real dataset (with $N=200$ and $n=13$, Noothigattu \textit{et al.}'s method has $98.2\%$ accuracy). In particular, we can observe that Algorithms~\ref{algo-user-level}~and~\ref{algo-record-level-func} can ensure greater than $80\%$ accuracy when $\epsilon>0.3$ and greater than $90\%$ accuracy when $\epsilon>1$.

\section{Discussions and Future Directions}\label{sec:discussion}

\textbf{Choice of $\epsilon$.} 
Our algorithms adopt the formal notion of $\epsilon$-DP in which smaller $\epsilon$ means stronger privacy protection but less utility. In the literature, typical $\epsilon$ values chosen for experimental studies are mostly between $0.1$ and $10$~\cite{jorgensen2015conservative,abadi2016deep,wang2019collecting}. In the industry, both Google Chrome and Apple iOS have implemented $\epsilon$-DP (in particular, a specific variant called $\epsilon$-LDP~\cite{duchi2013local}). In Google Chrome~\cite{erlingsson2014rappor}, $\epsilon$ per each datum is at most $2$, and $\epsilon$ per each user is at most $9$. Apple claims that $\epsilon$ per each datum per day in iOS is between $1$ and $2$, but a recent study~\cite{tang2017privacy} shows that it can be as high as $16$.




\textbf{Privacy Protection for Crowdsourced Data Collection for Fairness Studies of Machine Learning.}
Fairness of machine learning algorithms has received much interest recently~\cite{srivastava2019mathematical}. Among various fairness notions, demographic parity means equalizing the percentage of people who are predicted by the algorithm to be positive across different social groups (e.g., racial groups)~\cite{dwork2012fairness}. Srivastava\textit{~et al.~}\shortcite{srivastava2019mathematical} have collected a large crowdsourced dataset to capture people's perception of fairness of machine learning. As a user's perception of fairness may contain sensitive information, a future direction in the same vein as our current paper is to preserve the privacy of users in the aggregation of fairness preferences.



\section{Conclusion}\label{sec:conclusion}

This paper incorporates privacy protection into crowdsourced data collection used to guide ethical decision-making by AI. Based on the formal notion differential privacy, we proposed four different privacy protection paradigms with the consideration of voter-/record-level privacy protection and centralized/distributed perturbation.
Then, we specifically propose three algorithms which achieve the above four granularities of privacy protection paradigms. We have proved the privacy guarantees and the data utilities of our algorithms. To show the effectiveness of our algorithms, extensive experiments have been conducted on both synthetic datasets and a real-world dataset from the Moral Machine project.

\begin{acks}
The authors would like to thank the anonymous reviewers for their valuable comments. This work was partly supported by Natural Science Foundation of China (Nos. 61572398, 61772410, 61802298 and U1811461), the Fundamental Research Funds for the Central Universities (No. xjj2018237), and the China Postdoctoral Science Foundation (No. 2017M623177). Jun Zhao's research was supported by Nanyang Technological University (NTU) Startup Grant M4082311.020, Alibaba-NTU Singapore Joint Research Institute (JRI), Singapore Ministry of Education Academic Research Fund (AcRF) Tier 1 RG128/18, and NTU-WASP Joint Project M4082443.020. Han Yu's research was supported by the Nanyang Assistant Professorship (NAP) and the Joint NTU-WeBank Research Centre of Eco-Intelligent Applications (THEIA), Nanyang Technological University, Singapore.
\end{acks}




\bibliographystyle{ACM-Reference-Format}
\balance
\bibliography{bibfile}



\end{document}